%
%
\documentclass[peerreview, 10pt]{IEEEtran}
\usepackage{graphics}
\usepackage{epsfig}

\usepackage{subfigure}
\usepackage{graphics}

\usepackage{amsmath}
\usepackage{amsfonts}
\usepackage{amssymb}
\usepackage{stfloats}
\usepackage{tikz}
\usetikzlibrary{arrows}
\usetikzlibrary{arrows,decorations.pathmorphing,backgrounds,fit}
\tikzstyle{block}=[draw opacity=0.7,line width=1.4cm]

\hyphenation{op-tical net-works semi-conduc-tor}

\newcommand{\rem}[1]{}

\newcommand{\onec}
{
\left(
\begin{array}{c}
1 \\
\bc
\end{array}
\right)
}

\newtheorem{proposition}{Proposition}
\newtheorem{lemma}{Lemma}

\newtheorem{algorithm}{Algorithm}

\newcommand{\bre}{\begin{equation}}
\newcommand{\ere}{\end{equation}}

\newcommand{\ee}\]
\newcommand{\bra}{\begin{eqnarray}}
\newcommand{\era}{\end{eqnarray}}
\newcommand{\bfg}{\begin{figure}[hbtp]}
\newcommand{\efg}{\end{figure}}

\newcommand{\bit}{\begin{itemize}}
\newcommand{\eit}{\end{itemize}}
\newcommand{\ben}{\begin{enumerate}}
\newcommand{\een}{\end{enumerate}}

\newcommand{\bmu}{\mbox{\boldmath $\mu$}}

\newcommand{\given}{\: | \:}

\newcommand{\bomega}{\mbox{\boldmath $\omega$}}

\newcommand{\btlambda}{\tilde{\mbox{\boldmath $\lambda$}}}

\newcommand{\bzero}{{\bf 0}}

\newcommand{\bg}{{\bf{g}}}

\newcommand{\bc}{{\bf c}}

\newcommand{\bhc}{\hat {\bf c}}

\newcommand{\tlambda}{\tilde {\lambda}}

\newcommand{\tPsi}{\tilde{\Psi}}

\newcommand{\baa}{\begin{eqnarray*}}
\newcommand{\eaa}{\end{eqnarray*}}

\newcommand{\bs}{{\bf s}}

\newcommand{\bb}{{\bf b}}

\newcommand{\bh}{{\bf h}}
\newcommand{\bu}{{\bf u}}
\newcommand{\bv}{{\bf v}}

\newcommand{\bx}{{\bf x}}

\newcommand{\bone}{{\bf 1}}
\newcommand{\bzr}{{\bf 0}}

\newcommand{\cA}{{\cal A}}
\newcommand{\cP}{{\cal P}}
\newcommand{\cQ}{{\cal Q}}
\newcommand{\cE}{{\cal E}}
\newcommand{\cF}{{\cal F}}
\newcommand{\cR}{{\cal R}}

\newcommand{\cT}{{\cal T}}

\newcommand{\cH}{{\cal H}}

\newcommand{\cV}{{\cal V}}
\newcommand{\cJ}{{\cal J}}

\newcommand{\cC}{{\mathcal{C}}}
\newcommand{\cI}{{\mathcal{I}}}

\newcommand{\cN}{{\mathcal{N}}}
\newcommand{\cG}{\mathcal{G}}

\newcommand{\btc}{\tilde{\bf c}}

\newcommand{\hbc}{\hat{\bf c}}
\newcommand{\hbg}{\hat{\bf g}}

\newcommand{\beginproof}{\noindent \textbf{Proof: }  }
\newcommand{\finproof}{\noindent $\Box$\\}

\def\defined{\: {\stackrel{\scriptscriptstyle \Delta}{=}} \: }

\def\argmin{\mathop{\rm argmin}}

\def\defined{\: {\stackrel{\scriptscriptstyle \Delta}{=}} \: }

\newfont{\boldlarge}{msbm10 scaled 1100}

\newcommand{\D}{{\rm D}}
\renewcommand{\P}{{\rm P}}
\newcommand{\DS}{{\rm DS}}

\newcommand{\comment}[1]{}

\def\etal{$et \,\,al.\,\,$}

\newlength{\tmpbigbar}


\rem{
\newcommand{\D}{{\rm D}}
\renewcommand{\P}{{\rm P}}
\newcommand{\DS}{{\rm DS}}

}

\normalsize
\bibliographystyle{IEEEbib}

\begin{document}

\title{
Improved linear programming decoding of LDPC codes and bounds on the
minimum and fractional distance}

\author{\begin{tabular}{c c} David~Burshtein,~\IEEEmembership{Senior Member,~IEEE} & Idan~Goldenberg,~\IEEEmembership{Student Member,~IEEE}\\
School of Electrical Engineering & School of Electrical Engineering \\ Tel-Aviv University, Tel-Aviv 69978, Israel & Tel-Aviv University, Tel-Aviv 69978, Israel\\
Email:  burstyn@eng.tau.ac.il  & Email:  idang@eng.tau.ac.il
\end{tabular} }

\markboth{Submitted to IEEE Transactions on Information Theory}
{Burshtein and Goldenberg: Iterative Approximate Linear Programming
Decoding of Linear Codes with Linear Complexity}

\maketitle 
\thanks{This research was supported by the Israel Science Foundation, grant no. 772/09.}
\IEEEpeerreviewmaketitle

\begin{abstract}
We examine LDPC codes decoded using linear programming (LP). Four
contributions to the LP framework are presented. First, a new method
of tightening the LP relaxation, and thus improving the LP decoder,
is proposed. Second, we present an algorithm which calculates a
lower bound on the minimum distance of a specific code. This
algorithm exhibits complexity which scales quadratically with the
block length. Third, we propose a method to obtain a tight lower
bound on the fractional distance, also with quadratic complexity,
and thus less than previously-existing methods. Finally, we show how
the fundamental LP polytope for generalized LDPC codes and nonbinary
LDPC codes can be obtained.
\end{abstract}

\begin{keywords}
Linear programming decoding, maximum likelihood (ML) decoding,
low-density parity-check (LDPC) codes, minimum distance, fractional
distance.
\end{keywords}

\section{Introduction} \label{sec:Introduction}
Within the field of error-correcting codes, the method of linear
programming (LP) decoding of linear codes has attracted considerable
attention in recent years. Over the past several years, this method
has stirred a substantial amount of research. One reason for this is
that the LP decoder has the ML certificate property~\cite{lpdecode},
i.e., that if the LP decoder succeeds in producing an integral
solution vector, then it produces the ML solution. Another
attractive property of the LP decoder is that one may improve the
performance in various ways. These include using generic techniques
to tighten the LP relaxations~\cite{lpdecode}, using integer
programming or mixed integer linear
programming~\cite{lpdecode},~\cite{draper2007mdv}, adding
constraints to the Tanner graph~\cite{taghavi2008amlp} and guessing
facets of the polytope~\cite{dimakis2007gfp}.

However, LP decoding is disadvantaged as compared to iterative
decoding algorithms such as belief propagation or min-sum decoding
because the decoding complexity can be much larger. In general the
LP decoder entails a polynomial, non-linear complexity. Several
authors have proposed algorithms for low-complexity LP decoding,
e.g.~\cite{low_complexity_LP},~\cite{low_complexity_LP_journal},~\cite{yang2008nlp}.
Vontobel and Koetter~\cite{low_complexity_LP} have proposed an
iterative, Gauss-Seidel-type algorithm for approximate LP decoding.
In~\cite{lpon_journal} the approach of~\cite{low_complexity_LP} was
studied under a newly-proposed scheduling scheme. This study showed
that the performance of the LP decoder can be approached with linear
computational complexity. It was also shown that the iterative LP
decoder can correct a fixed fraction of errors with linear
computational complexity. These results were also extended to
generalized LDPC (GLDPC) codes.

In this paper, we present the following main results. First, we show
that merging check nodes in the graph can tighten the relaxation and
improve the LP decoder. Second, we demonstrate how the
linear-complexity LP decoder from \cite{lpon_journal} can be used to
obtain a lower bound on the minimum distance of a specific code,
which is also an upper bound on the fractional distance of the code,
with complexity $O(N^2)$, where $N$ is the block length. This lower
bound on the minimum distance can be improved by using the check
node merging technique. Third, we propose an algorithm which
calculates a tight lower bound on the fractional distance with
complexity $O(N^2)$. We note that Feldman \etal \cite{lpdecode} were
the first to propose an algorithm which calculates the exact
fractional distance; their algorithm entails a complexity which is
polynomial but not quadratic in the block length. This is also true
for the fractional distance calculation in \cite{yang2008nlp}, even
though it has reduced complexity as compared to \cite{lpdecode}. Our
approach has lower complexity as compared to these techniques.
Finally, we show how the fundamental polytope of two important
classes of codes, namely GLDPC and nonbinary codes, can be obtained.
This is accomplished by using the double description method
\cite{FukudaProdon1995,Motzkin}. In addition to these results, we
show how the linear-complexity LP decoder can be calculated
efficiently for GLDPC codes.

This paper is organized as follows. In section~\ref{sec:linprogdec}
the linear-complexity LP decoder~\cite{lpon_journal} is reviewed.
Section~\ref{sec:eff_comp} shows how to perform some computations
relating to the linear-complexity LP decoder in an efficient manner.
Section~\ref{sec:cuts for_improved_LPD} describes how improved LP
decoding can be achieved by merging check nodes. In
Section~\ref{sec:min distance bound} we propose an algorithm which
yields a lower bound on the minimum distance which is also an upper
bound on the fractional distance of specific LDPC codes. In
Section~\ref{sec: lower bound on dfrac}, an algorithm which produces
a tight lower bound on the fractional distance is presented. We show
how to obtain the fundamental polytope of GLDPC and nonbinary codes
in Section~\ref{sec:GLDPC_polytope}. Section~\ref{sec: results}
provides some numerical results, and the paper is concluded in
Section~\ref{sec: conclusion}.

\section{Approximate Iterative Linear programming decoding of GLDPC codes} \label{sec:linprogdec}

Consider a discrete binary-input memoryless channel (DMC) described
by the probability transition function $Q(y \given c)$ where $c$ is
the transmitted code-bit and $y$ is the channel output. Also
consider a GLDPC code $\cC$ represented by a Tanner graph with $N$
variable nodes and $M$ linear constraint nodes, where each
constraint node represents a small local constituent code, thus
generalizing single parity checks (SPCs) in plain LDPC codes. The
code $\cC$ is used to transmit information over the given DMC.
Following the notation in~\cite{lpdecode}, let $\cI$ and $\cJ$ be
the sets of variable and constraint nodes, respectively, such that
$|\cI| = N$ and $|\cJ|=M$. The variable node $i\in\cI$ is connected
to the set $\cN_i$ of constraint node neighbors. The constraint node
$j\in\cJ$ is connected to the set $\cN_j$ of constraint node
neighbors. Further denote by $\bc_j$ the vector \bre
\label{eq:cjvec_def} \bc_j
\defined \{ c_{i} \}_{i\in \cN_j} \: . \ere Given some
$j\in\cJ$, denote by $\cC_j$ the constituent binary linear code
corresponding to the constraint node $j\in\cJ$ (thus for plain LDPC
codes, $\cC_j$ is a simple parity-check code). Let $\bzero_j$ denote
the all-zero local codeword on constraint node $j$, and let $\bzero$
denote the all-zero codeword in $\cC$. A valid codeword $\bc \in
\cC$ satisfies $\bc_j \in \cC_j$, $\forall j\in\cJ$.

We assume, without loss of generality, that there are no parallel
edges in the Tanner graph, since otherwise we can modify the
structure of the code to satisfy this constraint.

The LP decoder is given by~\cite{lpdecode},

\bre {\hbc} \defined \argmin_{\bc,\bomega} \P(\bc) \label{eq: lp
problem} \ere where the vector $\bomega$ is defined by \bre
\label{eq:w_def} \bomega
\defined \{ w_{j,\bg} \}_{j\in\cJ,\: \bg \in \cC_j} \nonumber \ere
and where \bre \P(\bc) \defined \sum_{i\in\cI} c_i \gamma_i
\label{eq:mllp} \ere subject to \bre w_{j,\bg} \ge 0 \qquad \forall
j \in \cJ \: , \: \bg \in \cC_j \label{eq:positive_w} \ere \bre
\sum_{\bg \in \cC_j} w_{j,\bg} = 1 \qquad \forall j \in \cJ
\label{eq:w_sum_1} \ere \bre c_i = \sum_{\bg\in \cC_j \: , \: g_i=1}
w_{j,\bg} \qquad \forall j\in\cJ \: , \: i \in \cN_j \label{eq:ci}
\ere and where \bre \gamma_i
\defined \log \frac{Q(y_i \given 0)}{Q(y_i \given 1)}
\label{eq:gamma_def} \nonumber \ere where all logarithms henceforth
are to the base $e$.  We assume that there are no perfect
measurements from the channel, i.e. \bre -\infty < \gamma_i < \infty
\qquad \forall i\in\cI \: . \label{eq:lossy_channel} \ere Note that
if we transmit over the binary erasure channel (BEC) then this
assumption does not hold. However, we can re-normalize the
$\gamma_i$'s by replacing $\gamma_i=\infty$ with $\gamma_i=1$ and
$\gamma_i=-\infty$ with $\gamma_i=-1$. After this re-normalization
the assumption~\eqref{eq:lossy_channel} holds, and we can then apply
the decoding algorithm.

We term the minimization problem~\eqref{eq: lp
problem}-\eqref{eq:ci}, {\em Problem-P}. We also define \bre
\label{eq:wjsvec_def} \bomega_j
\defined \{ w_{j,\bg} \}_{\bg\in \cC_j} \nonumber \ere
The polytope $\cQ_j(\cC)$ is defined as follows~\cite{lpdecode}:
$\bc\in\cQ_j(\cC)$ (sometimes we will use the notation
$\bc_j\in\cQ_j(\cC)$, the meaning will be clear from the context) if
and only if there exists a vector $\bomega_j$ such
that~\eqref{eq:positive_w}-\eqref{eq:ci} hold for this value of $j$.
For plain LDPC codes it was shown that $\bc\in\cQ_j(\cC)$ if and
only if \bre \label{eq:omegaj_def2}
\begin{array}{l l} 0 \le c_i \le 1 &  \forall i\in\cN_j \\
\sum_{i\in\cN_j\setminus S} c_i + \sum_{i\in S} \left( 1-c_i \right)
\ge 1 & \forall S \subseteq \cN_j \: , |S| \: {\rm odd} \end{array}
\ere We further denote by $\cQ(\cC) \defined \bigcap_{j\in\cJ}
\cQ_j(\cC)$ the relaxed~\cite{lpdecode} or fundamental polytope.
Thus $\bc$ is feasible in~\eqref{eq:positive_w}-\eqref{eq:ci} if and
only if $\bc\in\cQ(\cC)$. We thus have, in addition to \eqref{eq: lp
problem}, the relation \bre \hbc = \argmin_{\bc \in \cQ(\cC)}
\P(\bc) \label{eq: lp problem, no w} \ere

An important observation is that the LP decoder has the ML
certificate~\cite{lpdecode} in the sense that if the solution of
Problem-P is integer (in fact an integer $\bomega$ implies that
$\bc$ is also integer by~\eqref{eq:ci}) then it is the ML decision.

Now consider the following problem, which we term {\em Problem-D}.
The variables in this problem are
$$
\bu = \{u_{i,j}\}_{i\in \cI,j\in \cN_i}
$$
and the problem is defined by \bre \max_{\bu} \D(\bu) \quad {\rm
where} \quad \D(\bu) \defined \sum_{j\in \cJ} \min_{\bg\in \cC_j}
\left\{ \sum_{i\in \cN_j} u_{i,j} g_i \right\} \label{eq:lp_dual}
\ere subject to \bre \sum_{j\in \cN_i} u_{i,j} = \gamma_i \qquad
\forall i\in \cI \: . \label{eq:lp_dual_constraint} \ere

\begin{lemma}
\label{lem:lp_dual} Problem-P and Problem-D are Lagrange dual
problems. The minimum of Problem-P is equal to the maximum of
Problem-D.
\end{lemma}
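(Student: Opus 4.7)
The plan is to derive Problem-D as the Lagrange dual of Problem-P directly, and then invoke strong LP duality to obtain equality of optima. First I would form the Lagrangian by attaching multipliers $u_{i,j}$ (for $i \in \cI$, $j \in \cN_i$) to the linking equalities~\eqref{eq:ci} and multipliers $v_j$ (for $j\in \cJ$) to the normalization equalities~\eqref{eq:w_sum_1}, while keeping the sign constraints~\eqref{eq:positive_w} as explicit restrictions on the primal variables:
\[
L(\bc,\bomega,\bu,\bv) = \sum_{i\in\cI} c_i \gamma_i + \sum_{j\in\cJ}\sum_{i\in\cN_j} u_{i,j}\!\left(\sum_{\bg\in\cC_j:\, g_i=1} w_{j,\bg} - c_i\right) + \sum_{j\in\cJ} v_j\!\left(1 - \sum_{\bg\in\cC_j} w_{j,\bg}\right).
\]

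Next I would regroup the Lagrangian by primal variable. The coefficient of $c_i$ becomes $\gamma_i - \sum_{j\in\cN_i} u_{i,j}$; minimizing over unconstrained $c_i$ is finite only when this coefficient vanishes, which reproduces the dual feasibility condition~\eqref{eq:lp_dual_constraint}. The coefficient of $w_{j,\bg}$ becomes $\sum_{i\in\cN_j:\, g_i=1} u_{i,j} - v_j$; minimizing over $w_{j,\bg}\ge 0$ is finite only when this coefficient is nonnegative for every $\bg\in\cC_j$, i.e.\ $v_j \le \sum_{i\in\cN_j} u_{i,j} g_i$, and in that case the infimum equals zero. The dual objective therefore reduces to $\sum_{j\in\cJ} v_j$ subject to those inequalities; since $v_j$ appears only in its own block and only with a positive coefficient, the optimum is attained by tightening $v_j = \min_{\bg\in\cC_j} \sum_{i\in\cN_j} u_{i,j} g_i$, yielding precisely $\D(\bu)$ as in~\eqref{eq:lp_dual} under the constraint~\eqref{eq:lp_dual_constraint}.

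For the equality of optima I would appeal to strong LP duality. Problem-P is a linear program, so it suffices to verify that it has a finite optimum. Feasibility is immediate: the all-zero codeword together with $w_{j,\bzero_j}=1$ (and $w_{j,\bg}=0$ otherwise) satisfies~\eqref{eq:positive_w}--\eqref{eq:ci}. Boundedness follows because \eqref{eq:ci} together with $w_{j,\bg}\ge 0$ and $\sum_\bg w_{j,\bg}=1$ forces $c_i\in[0,1]$, and~\eqref{eq:lossy_channel} ensures that each $\gamma_i$ is finite, so $\P(\bc)$ is bounded on the feasible set. Strong duality for LP then gives $\min \P = \max \D$.

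The only delicate step is the regrouping and the case analysis on the sign constraints on $\bomega$; I expect no real obstacle beyond bookkeeping, since the structure of Problem-P is purely linear with only finitely many variables and constraints, so duality is guaranteed once the dual is correctly identified.
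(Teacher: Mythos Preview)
Your derivation is correct: forming the Lagrangian by dualizing~\eqref{eq:ci} with multipliers $u_{i,j}$ and~\eqref{eq:w_sum_1} with multipliers $v_j$, then eliminating $c_i$ (free) and $w_{j,\bg}\ge 0$ in the standard way, gives exactly Problem-D after maximizing out $v_j$; feasibility and boundedness of Problem-P are verified as you state, so strong LP duality applies. The paper itself does not supply a proof here but simply refers to~\cite{lpon_journal}, so there is no in-paper argument to compare against; your approach is the standard Lagrangian-dual computation and is precisely what one would expect that reference to contain.
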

The lemma is proven in~\cite{lpon_journal}.

Vontobel and Koetter~\cite{low_complexity_LP} have proposed an
algorithm which yields an approximated solution to Problem-D. They
suggested to consider the following ``soft'' version of Problem-D,
termed {\em Problem-DS}: \bre \max_{\bu} \DS(\bu)
\label{eq:lp_duals} \nonumber \ere where \bre \DS(\bu)
\defined -\frac{1}{K} \sum_{j\in \cJ} \log \sum_{\bg\in \cC_j} e^{-K
\sum_{i\in \cN_j} u_{i,j} g_i} \label{eq:lp duals objf} \ere subject
to~\eqref{eq:lp_dual_constraint}. Note that the objective function
in~\eqref{eq:lp duals objf} is obtained from~\eqref{eq:lp_dual} by
approximating the $\min()$ function using a log-sum-exp function
with a parameter $K>0$. This approximation can become arbitrarily
accurate by increasing $K$. Also note that $\DS(\bu)$ is concave
since the log-sum-exp function is convex. In what follows, we give a
brief overview of the linear-complexity iterative LP decoder. The
reader is referred to \cite{lpon_journal} for additional details.

Let $\tPsi^j$ denote the $(|\cN_j|+1) \times |\cC_j|$ matrix \bre
\label{eq:tpsi_j_def} \tPsi^j \defined \left(
\begin{array}{c}
  \bone_{|\cC_j|} \\
  \Psi^j
\end{array}
\right) \ere where $\bone_{|\cC_j|}$ is a row vector of $|\cC_j|$
ones and the columns of $\Psi^j$ are all the codewords of $\cC_j$.

Note that the equation set \bre \label{eq:w1_ci_mat_form} \left(
\begin{array}{c}
  1 \\
  \bc_j
\end{array}
\right) = \tPsi^j \bomega_j \nonumber\ere expresses the
constraints~\eqref{eq:w_sum_1}-\eqref{eq:ci}. We define
$$
\Xi^j \defined \left( \tPsi^j \right)^T \left( \tPsi^j
\left(\tPsi^j\right)^T \right)^{-1}
$$
where $A^T$ is the transpose of matrix $A$, and
$$
\eta \defined \max_j \left\{ |\cC_j| |\cN_j| \max_{l,i} |
\Xi^j_{l,i} | \right\}
$$
where $\Xi^j_{l,i}$ is the $(l, i)$ element of $\Xi^j$.

Now let $\epsilon>0$ be a constant such that $\epsilon \le 1/\eta$.
We define \bre \label{eq:tlamdef_GLDPC} \tlambda_i
\defined \lambda_i (1-\eta\epsilon) + \frac{\eta}{2}\epsilon \qquad
\forall i\in\cI \ere and \bre \label{eq:btlambda_def} \btlambda
\defined \{ \tlambda_i \}_{i\in\cI} \: . \nonumber \ere
We will also define the quantities \bra \label{eq:B_kj_A_kj} A_{k,j}
&\defined& \sum_{\bg\in \cC_j, \:
g_k=1} e^{-K\sum_{i\in \cN_j\setminus k} u_{i,j} g_i} \nonumber \\
B_{k,j} & \defined & \sum_{\bg\in \cC_j, \: g_k=0} e^{-K\sum_{i\in
\cN_j\setminus k} u_{i,j} g_i} \era The following iterative
algorithm was proposed in~\cite{lpon_journal} to find an approximate
solution to Problem-P in linear time complexity.

\begin{algorithm}
\label{alg} Given a GLDPC code, channel observations and some fixed
parameters $\epsilon_0\in(0,1/\eta)$ and $K>0$ do:

\begin{enumerate}
\item
{\bf Initialization:}
\begin{align*}
\forall i\in\cI, j\in\cN_i \: : \: \: & u_{i,j} = \gamma_i / |\cN_i|
\\
\forall i\in\cI, j\in\cN_i \: : \: \: & v_{i,j} = -\frac{1}{K} \log
\frac{B_{i,j}}{A_{i,j}}
\\
\forall i\in\cI, j\in\cN_i \: : \: \: & \lambda_{i,j} = \frac{1}{1 +
e^{K(u_{i,j}-v_{i,j})}}
\\
\forall i\in\cI \: : \: \: & \lambda_{i} = \frac{1}{|\cN_i|}
\sum_{j\in\cN_i} \lambda_{i,j}
\\
\forall i\in\cI \: : \: \: & \epsilon_i = \max_{j\in\cN_i}
|\lambda_{i,j}-\lambda_i|
\\
&\cA = \{ i\in\cI \: : \: \epsilon_i \ge \epsilon_0 \}
\end{align*}
\item
{\bf Iteration:} Pick an arbitrary element $k \in \cA$ and make the
following updates:
\begin{align*}
&\forall j\in\cN_k  : \: u_{k,j} = v_{k,j} +
\frac{\gamma_k}{|\cN_k|} - \frac{1}{|\cN_k|} \sum_{l\in\cN_k}
v_{k,l}
\\
&\forall i\in\cN_j\setminus k, j\in\cN_k :  \: v_{i,j} =
-\frac{1}{K} \log \frac{B_{i,j}}{A_{i,j}}
\\
&\forall i\in\cN_j, j\in\cN_k:  \: \lambda_{i,j} = \frac{1}{1 +
e^{K(u_{i,j}-v_{i,j})}}
\\
&\forall i\in\cN_j, j\in\cN_k  :  \: \lambda_{i} = \frac{1}{|\cN_i|}
\sum_{j'\in\cN_i} \lambda_{i,j'}
\\
&\forall i\in\cN_j, j\in\cN_k :  \: \epsilon_i = \max_{j'\in\cN_i}
|\lambda_{i,j'}-\lambda_i|
\\
&\forall i\in\cN_j, j\in\cN_k :  \: \mbox{If } i\in\cA \mbox{ and }
\epsilon_i < \epsilon_0 \mbox{ then } \cA = \cA \setminus i
\\
&\forall i\in\cN_j, j\in\cN_k :  \: \mbox{If } i\not\in\cA \mbox{
and } \epsilon_i \ge \epsilon_0 \mbox{ then } \cA = \cA \cup i
\end{align*}
\item
{\bf Loop Control:} If $\cA \ne \emptyset$ then repeat step 2.
Otherwise proceed to step 4.
\item
{\bf Produce the final solution and Exit:} $\forall i\in\cI$ :
Compute $\tlambda_i$ using~\eqref{eq:tlamdef_GLDPC}, and $\hat{c}_i$
using \bre \hat{c}_i = \left\{
  \begin{array}{ll}
    1 & \hbox{if $\tlambda_i>0.5$} \\
    0 & \hbox{if $\tlambda_i<0.5$}
  \end{array}
\right. \label{eq:hci} \nonumber \ere Output the primal vector
$\btlambda$, the dual vector $\bu$ and the bitwise-estimated
integral vector $\{\hat{c}_i\}_{i \in \cI}$

\end{enumerate}
\end{algorithm}

Note that each edge in the graph connecting the variable node $i$
and the constraint node $j$ is associated with the variables
$u_{i,j}$, $v_{i,j}$ and $\lambda_{i,j}$. Each variable node
$i\in\cI$ is associated with the variables $\lambda_i$,
$\tlambda_i$, $\hat{c}_i$ and
$\epsilon_i$. 
In~\cite{lpon_journal} the following assumptions were made.
\begin{enumerate}
\item
\bre | \gamma_i | \le \gamma_{\max} \qquad \forall i\in\cI
\label{eq:gammam} \ere for some positive constant $\gamma_{\max}$.
Note that practical channels with non-perfect measurements, that
satisfy~\eqref{eq:lossy_channel}, also satisfy~\eqref{eq:gammam},
since their output is quantized to some finite set.
\item
As was indicated earlier, we can assume without loss of generality that there are no parallel edges in the Tanner graph.
\item
We assume that $d'_j \ge 3$ $\forall j\in\cJ$. where $d'_j$ is the
dual distance of the constituent code corresponding to the
constraint node $j$. For plain LDPC codes this assumption
degenerates to $|\cN_j|\ge 3$ $\forall j\in\cJ$ (this simpler
version appears also in~\cite{low_complexity_LP}).
\item
We assume that all node degrees are bounded by some constant (independent of $N$).
\end{enumerate}
Under these assumptions it was shown in \cite{lpon_journal}, for any
fixed $\epsilon_0\in(0,1/\eta)$ and $K>0$, that Algorithm~\ref{alg}
exits after $O(N)$ iterations, and that the vector $\btlambda$ is
feasible in Problem-P. Furthermore, by setting $K$ sufficiently
large and $\epsilon_0$ sufficiently small, we can make
$(P(\btlambda) - P^\ast) / N$, where $\P^\ast$ is the minimum value
of Problem-P, arbitrarily small. More precisely,

\cite[Theorem 1]{lpon_journal}: Consider Algorithm~\ref{alg} under
the four assumptions above. For any given $\delta$, after $O(N)$
iterations, each with a constant number of operations, the algorithm
yields a vector, $\btlambda$, which is feasible in Problem-P and
satisfies \bre 0 \le \P(\btlambda) - \P^{\ast} \le \delta N
\label{eq:lincomp} \ere

By weak duality and by the primal feasibility of $\btlambda$, we
have \bre \label{eq: sandwich} \D(\bu) \le \P^\ast \le \P(\btlambda)
\ere We will make use of the fact that $\P^\ast$ is sandwiched
between the lower bound $\D(\bu)$ and the upper bound
$\P(\btlambda)$. In fact, in the proof of \cite[Theorem
1]{lpon_journal}, it was shown that \bre \label{eq: duality gap}
\P(\btlambda)-\D(\bu) \le \delta N \ere and thus follows the result
of the theorem.

\section{Efficient computations related to the iterative approximate LP decoder} \label{sec:eff_comp}
In this section, we demonstrate how the computation of $A_{k,j}$ and
$B_{k,j}$, defined in~\eqref{eq:B_kj_A_kj} can be efficiently
implemented for GLDPC codes. For LDPC codes an efficient computation
scheme was presented in \cite{lpon_journal}. We also show how to
efficiently implement the calculation of $\D(\bu)$.

We follow the approach proposed in~\cite{wolf1978efficient} that
uses a trellis representation of the code. To simplify the notation
we assume that $\cN_j = \{ 0,1,\ldots,n-1 \}$, and we fix $j$ and
omit it from $A_{k,j}$, $B_{k,j}$, $\cC_j$ and $u_{i,j}$. In
addition we assume that the code $\cC_j$ is represented by the
parity check matrix
$$
H = \left[ \bh_0 \: \bh_1 \: \ldots \: \bh_{n-1} \right]
$$
with columns $\bh_0, \bh_1, \ldots, \bh_{n-1}$, such that the
vectors $\bh_i$ are $m$-dimensional binary vectors representing
elements of the field ${\rm GF}(2^m)$.

\subsection{Efficient Computation of $A_{k,j}$ and $B_{k,j}$} \label{sec: Akj and Bkj}
We need to compute $A_k$ and $B_k$, $k=0,1,\ldots,n-1$, defined by
\bra \label{eq:B_k_A_k} A_{k} &\defined& \sum_{\bg \: : \: H \bg =
0, \: g_k=1} e^{-K \left( \sum_{i=0}^{n-1} u_{i} g_i - u_k \right)}
\nonumber \\ B_{k} &\defined& \sum_{\bg \: : \: H \bg = 0, \: g_k=0}
e^{-K \sum_{i=0}^{n-1} u_{i} g_i} \: . \era where $\bg
\defined \left[ g_0, g_1, \ldots, g_{n-1} \right]^T$

Let $H_k$, $\bg_k$, $E(k,\bs)$ and $E(k,\bs,b)$ be defined by \bre
H_k \defined \left[ \bh_0 \: \bh_1 \: \ldots \: \bh_{k} \right]
\quad , \quad \bg_k \defined \left[ g_0, g_1, \ldots, g_{k}
\right]^T \label{eq: def of Hk gk}\ere
$$
E(k,\bs) = \sum_{\bg_k \: : \: H_k \bg_k = \bs} e^{-K \sum_{i=0}^k u_i g_i}
$$
and \bra E(k,\bs,b) &=& \sum_{\bg_k \: : \: H_k \bg_k = \bs, g_k =
b} e^{-K \sum_{i=0}^k u_i g_i} \nonumber \\ &=& E(k-1,\bs \oplus b
\cdot \bh_k) e^{-K u_k \cdot b} \era In these definitions $\bs$ is
an $m$ dimensional binary vector, representing an element in ${\rm
GF}(2^m)$, $b \in \{ 0, 1 \}$, $g_i \in \{ 0, 1 \}$, and $\oplus$ is
the bitwise exclusive or operator (addition in ${\rm GF}(2^m)$).

Similarly we define \bra \tilde{H}_k &\defined& \left[ \bh_{k+1} \:
\bh_{k+2} \: \ldots \: \bh_{n-1} \right] \nonumber \\ \tilde{\bg}_k
&\defined& \left[ g_{k+1}, g_{k+2}, \ldots, g_{n-1} \right]^T \era
$$
\tilde{E}(k,\bs) = \sum_{\tilde{\bg}_k \: : \: \tilde{H}_k \tilde{\bg}_k = \bs} e^{-K \sum_{i=k+1}^{n-1} u_i g_i}
$$

$E(k,\bs)$ and $\tilde{E}(k,\bs)$ can be computed recursively using
the forward recursion for $k = 0,1,\ldots,n-1$ \bre E(k,\bs) =
E(k-1,\bs) + E(k-1,\bs \oplus \bh_{k}) \cdot e^{-K u_{k}}
\label{eq:forward_rec} \ere (for a given value of $k$ we use this
recursion to compute $E(k,\bs)$ for all possible states $\bs$) and
the backward recursion for $k = n-2,\ldots,1,0$ \bre
\tilde{E}(k,\bs) = \tilde{E}(k+1,\bs) + \tilde{E}(k+1,\bs \oplus
\bh_{k+1}) \cdot e^{-K u_{k+1}} \label{eq:backward_rec} \ere and the
initial conditions \bre E(-1,\bs) = \left\{
              \begin{array}{ll}
                1, & \hbox{\bs=0} \\
                0, & \hbox{otherwise}
              \end{array}
            \right.
\ere
and
\bre \tilde{E}(n-1,\bs) =
\left\{
              \begin{array}{ll}
                1, & \hbox{\bs=0} \\
                0, & \hbox{otherwise}
              \end{array}
            \right.
\ere

Finally, $A_k$ and $B_k$ are computed using
\bre \label{eq: calcAk}
A_k = e^{K u_k} \sum_{\bs} E(k,\bs,1) \tilde{E}(k,\bs)
\ere
and
\bre \label{eq: calcBk}
B_k = \sum_{\bs} E(k,\bs,0) \tilde{E}(k,\bs)
\ere
for $k=0,\ldots,n-1$.

In practice, in order to avoid numerical problems, the forward
recursion~\eqref{eq:forward_rec} and the backward
recursion~\eqref{eq:backward_rec} are computed using logarithmic
computation.

Another possibility is to use the method proposed by Hartmann and
Rudolph~\cite{hartmann1976oss}. However, in this case one faces a
numerical problem since the method involves subtractions, and not
only summations.

\subsection{Efficient calculation of $\D(\bu)$}
\label{sec: calculation of D(u)} To calculate $\D(\bu)$ for GLDPC
codes, we use the same methodology as in the calculation of
$A_{k,j}$ and $B_{k,j}$, and we use the definitions of $H_k$ and
$\bg_k$ in \eqref{eq: def of Hk gk}. The difference is that here we
only need a forward recursion for the calculation.

By defining \bre A(k,\bs) \defined \min_{\bg_k: \; H_k \bg_k=
\bs}\sum_{i=0}^{n-1} u_i g_i \nonumber \ere it is easily seen that
our objective (the minimization in \eqref{eq:lp_dual}) is equal to
\bre \label{eq: D(u) efficient} A(n-1,0)\ere Now, $A(k,\bs)$ can be
calculated recursively for $k=0,1,\dots,n-1$ as
$$
A(k,\bs) = \min(A(k-1,\bs),A(k-1,\bs \oplus \bh_k)+u_k)
$$
with the initial conditions \bre A(-1,\bs) = \left\{
              \begin{array}{ll}
                0, & \hbox{\bs=0} \\
                \infty, & \hbox{otherwise}
              \end{array}
            \right. \label{eq: D(u) initial conditions}
\ere

\section{A new method for improved LP decoding} \label{sec:cuts for_improved_LPD}
Consider some GLDPC code $\cC$ with an underlying Tanner graph
$\cG$. Let $\hbc$ be the LP decoded word considered in
Section~\ref{sec:linprogdec}, i.e.
$$
\hbc = \argmin_{\bc \in \cQ(\cC)} \P(\bc)
$$
where
\begin{multline*} \cQ(\cC) = \\ \left\{ \bc \: : \: \exists
\bomega = \left\{ w_{j,S} \right\}_{j\in\cJ,\: S\in\cE_j} \mbox{
s.t. \eqref{eq:positive_w}, \eqref{eq:w_sum_1} and \eqref{eq:ci}
hold} \right\}
\end{multline*}

Now suppose that we merge some of the constraint nodes together to
form an alternative representation of the given code, with
underlying graph $\cG'$. This is illustrated in
Figure~\ref{fig:merge_example}, where we merge the constraint nodes
$j_1$ and $j_2$ into $j'_1$ (i.e. the constituent code associated
with $j'_1$ satisfies both the constraints implied by $j_1$ and
those implied by $j_2$), and the constraint nodes $j_3$ and $j_4$
into $j'_2$. In this example the constraint node $j_5$ is left
unchanged ($j'_3$ in $\cG'$).
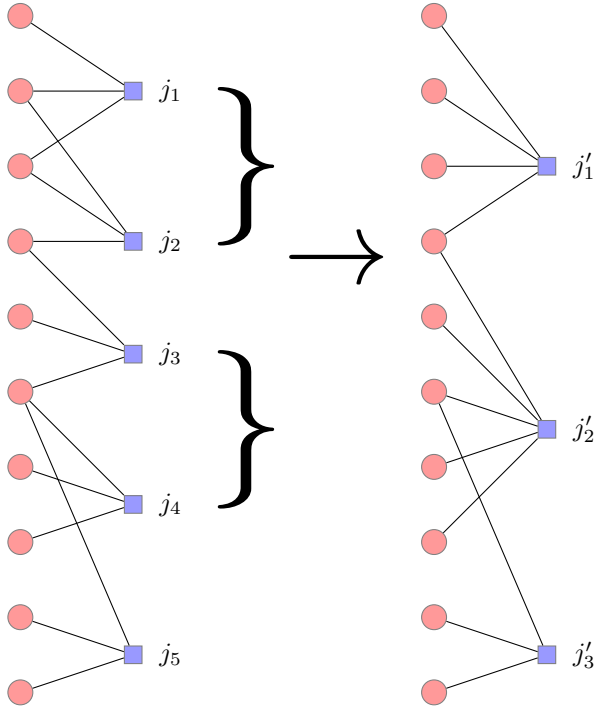
\begin{figure}[hbtp]
\begin{center}
\begin{tikzpicture}
[var/.style={circle,draw=black!50,fill=red!40},
par/.style={rectangle,draw=black!50,fill=blue!40},
txt/.style={},
txt2/.style={scale=2},
txt3/.style={scale=3},
txt4/.style={scale=4},
txt5/.style={scale=5},
txt6/.style={scale=6},
]


\node (v1)  at (0,10)  [var] {};
\node (v2)  at (0,9)   [var] {};
\node (v3)  at (0,8)   [var] {};
\node (v4)  at (0,7)   [var] {};
\node (v5)  at (0,6)   [var] {};
\node (v6)  at (0,5)   [var] {};
\node (v7)  at (0,4)   [var] {};
\node (v8)  at (0,3)   [var] {};
\node (v9)  at (0,2)   [var] {};
\node (v10) at (0,1)   [var] {};

\node (p1) at (1.5,9)   [par] {};
\node (p2) at (1.5,7)   [par] {};
\node (p3) at (1.5,5.5) [par] {};
\node (p4) at (1.5,3.5) [par] {};
\node (p5) at (1.5,1.5) [par] {};

\node at   (2.,9)   [txt] {$j_1$};
\node at   (2.,7)   [txt] {$j_2$};
\node at   (2.,5.5) [txt] {$j_3$};
\node at   (2.,3.5) [txt] {$j_4$};
\node at   (2.,1.5) [txt] {$j_5$};

\node at   (3,8)  [txt6] {$\}$};
\node at   (3,4.5)  [txt6] {$\}$};

\node (m) at   (4.2,6.7) [txt4] {$\rightarrow$};

\draw[-] (v1) -- (p1);
\draw[-] (v2) -- (p1);
\draw[-] (v3) -- (p1);

\draw[-] (v2) -- (p2);
\draw[-] (v3) -- (p2);
\draw[-] (v4) -- (p2);

\draw[-] (v4) -- (p3);
\draw[-] (v5) -- (p3);
\draw[-] (v6) -- (p3);

\draw[-] (v6) -- (p4);
\draw[-] (v7) -- (p4);
\draw[-] (v8) -- (p4);

\draw[-] (v6)  -- (p5);
\draw[-] (v9)  -- (p5);
\draw[-] (v10) -- (p5);


\node (vp1)  at (5.5,10)  [var] {};
\node (vp2)  at (5.5,9)   [var] {};
\node (vp3)  at (5.5,8)   [var] {};
\node (vp4)  at (5.5,7)   [var] {};
\node (vp5)  at (5.5,6)   [var] {};
\node (vp6)  at (5.5,5)   [var] {};
\node (vp7)  at (5.5,4)   [var] {};
\node (vp8)  at (5.5,3)   [var] {};
\node (vp9)  at (5.5,2)   [var] {};
\node (vp10) at (5.5,1)   [var] {};

\node (pp1) at (7,8)   [par] {};
\node (pp2) at (7,4.5) [par] {};
\node (pp3) at (7,1.5) [par] {};

\node at   (7.5,8)   [txt] {$j'_1$};
\node at   (7.5,4.5) [txt] {$j'_2$};
\node at   (7.5,1.5) [txt] {$j'_3$};

\draw[-] (vp1) -- (pp1);
\draw[-] (vp2) -- (pp1);
\draw[-] (vp3) -- (pp1);
\draw[-] (vp4) -- (pp1);

\draw[-] (vp4) -- (pp2);
\draw[-] (vp5) -- (pp2);
\draw[-] (vp6) -- (pp2);
\draw[-] (vp7) -- (pp2);
\draw[-] (vp8) -- (pp2);

\draw[-] (vp6)  -- (pp3);
\draw[-] (vp9)  -- (pp3);
\draw[-] (vp10) -- (pp3);

\end{tikzpicture}
\end{center}
\caption{Starting from the graph $\cG$ on the left part of the figure that represents the code $\cC$, we merge constraint nodes, thus creating a new representation of $\cC$ with an underlying graph $\cG'$ shown on the right.
\label{fig:merge_example}}
\end{figure}

Denote by $\cJ'$ the new set of constraint nodes corresponding to $\cG'$. Now, we can apply the LP relaxation to the new representation and calculate $\hbc'$ defined by,
$$
\hbc' = \argmin_{\bc \in \cQ'(\cC)} \P(\bc)
$$
where
\begin{multline*}
\cQ'(\cC) = \left\{ \bc \: : \: \exists \bomega = \left\{ w'_{j,\bg}
\right\}_{j\in\cJ',\: \bg\in\cC_j'} \mbox{ s.t.
\eqref{eq:positive_w}, \eqref{eq:w_sum_1}  } \right.
\\ \left. \mbox{ and \eqref{eq:ci} (with $\cJ',\cC_j'$ instead of $\cJ,\cC_j$) hold} \right\}
\end{multline*}

\begin{proposition}
\label{prop:Q2_subset_Q1}
$\cQ'(\cC) \subseteq \cQ(\cC)$.
\end{proposition}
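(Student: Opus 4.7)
The plan is to show the inclusion directly: take any $\bc \in \cQ'(\cC)$, produce a certificate $\{\bomega_j\}_{j\in\cJ}$ showing $\bc \in \cQ(\cC)$, and verify the three defining conditions \eqref{eq:positive_w}, \eqref{eq:w_sum_1}, \eqref{eq:ci}. The construction is by marginalization of the $\omega'$-weights of the merged graph onto the original constraints.

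Concretely, fix $\bc \in \cQ'(\cC)$ with associated weights $\{w'_{j',\bg'}\}_{j'\in\cJ',\,\bg'\in\cC_{j'}}$ satisfying \eqref{eq:positive_w}--\eqref{eq:ci} in $\cG'$. For any original constraint $j \in \cJ$, let $j' \in \cJ'$ be the merged node containing it. The first key observation is structural: because $j'$ was obtained by merging $j$ with other constraints, every local codeword $\bg' \in \cC_{j'}$ must satisfy the parity relation of $j$ on the variables $\cN_j \subseteq \cN_{j'}$. Thus the restriction map $\pi_j : \cC_{j'} \to \{0,1\}^{\cN_j}$, $\bg' \mapsto \bg'|_{\cN_j}$, has image contained in $\cC_j$. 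I would then define
\[
w_{j,\bg} \;\defined\; \sum_{\bg' \in \cC_{j'} \,:\, \pi_j(\bg') = \bg} w'_{j',\bg'} \qquad \forall\, \bg \in \cC_j .
\]

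Next I would verify the three LP conditions in turn. Nonnegativity \eqref{eq:positive_w} is immediate because each $w'_{j',\bg'} \ge 0$. The normalization \eqref{eq:w_sum_1} follows by exchanging the order of summation:
\[
\sum_{\bg \in \cC_j} w_{j,\bg} \;=\; \sum_{\bg' \in \cC_{j'}} w'_{j',\bg'} \;=\; 1 ,
\]
where the last equality is the normalization for $j'$ in $\cG'$. For the marginal condition \eqref{eq:ci}, pick any $i \in \cN_j$; since $\cN_j \subseteq \cN_{j'}$, the same index $i$ also appears in $\cN_{j'}$, and
\[
\sum_{\bg \in \cC_j\,:\,g_i = 1} w_{j,\bg} \;=\; \sum_{\bg' \in \cC_{j'}\,:\,g'_i = 1} w'_{j',\bg'} \;=\; c_i ,
\]
the last equality being \eqref{eq:ci} applied to $j'$ in $\cG'$.

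This shows $\bc \in \cQ_j(\cC)$ for every $j \in \cJ$, hence $\bc \in \cQ(\cC)$. The argument requires no calculation beyond interchanging sums; the only real content is the structural observation that merging makes the constituent code more restrictive, so $\pi_j(\cC_{j'}) \subseteq \cC_j$. I do not foresee any obstacle here — the proof is essentially a routine marginalization, and the same template would extend verbatim to any further tightening obtained by replacing a group of constraints by a single constraint whose constituent code equals their intersection of extensions to the union of their neighborhoods.
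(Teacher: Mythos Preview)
Your proof is correct and follows essentially the same approach as the paper: define $w_{j,\bg}$ by marginalizing the weights $w'_{j',\bg'}$ of the merged node $j'$ over those $\bg'$ whose restriction to $\cN_j$ equals $\bg$, then verify \eqref{eq:positive_w}--\eqref{eq:ci}. The only minor difference is presentational: the paper first treats the case where a single group of nodes is merged and then obtains the general case by chaining such steps, whereas you handle an arbitrary merging in one pass by letting $j'$ be the (unique) merged node containing $j$; your version is slightly more direct but the content is the same.
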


\beginproof
Suppose that $\bc \in \cQ'(\cC)$. We need to show that $\bc \in
\cQ(\cC)$. First suppose that the only difference between $\cG$ and
$\cG'$ is that the constraint nodes $j_1, j_2, \ldots, j_r \in \cJ$
in $\cG$ are merged to form the constraint node $j' \in \cJ'$ in
$\cG'$. All the other constraint nodes in $\cG$ and $\cG'$ are
identical (i.e. they are associated with the same constituent codes
and are connected to the same variable nodes). Now, there exists
$\left\{ w'_{j,\bg} \right\}_{j\in\cJ',\: \bg\in\cC_j'}$ such
that~\eqref{eq:positive_w}, \eqref{eq:w_sum_1} and \eqref{eq:ci}
(with $\cJ',\cC_j'$ instead of $\cJ,\cC_j$) hold. For $j = j_1, j_2,
\ldots, j_r$, define $w_{j,\bg}$ by,
$$
w_{j,\bg} \defined \sum_{\bg'\in \cC_j',\: \bg'_{\cN_j}=\bg}
w'_{j',\bg'}
$$
where $\bg'_{\cN_j}$ is the sub-codeword $\bg'$ restricted to
indices $i\in\cN_j$. For all other values of $j\in\cJ$ we define
$w_{j,\bg} = w'_{j'(j),\bg}$ where $j'(j)\in\cJ'$ is the constraint
node in $\cG'$ that corresponds to the constraint node $j\in\cJ$ in
$\cG$.

Note that $\bg'\in \cC_j'$ only if $\bg'_{\cN_j} \in \cC_j$ (since
any codeword associated with the merged node $j'$ is a codeword of
$\cC_j$ for all $j=j_1,\ldots,j_r$). It follows that for all
$j=j_1,\ldots,j_r$,
$$
w_{j,\bg} \ge 0
$$
$$
\sum_{\bg \in \cC_j} w_{j,\bg} = \sum_{\bg' \in \cC_j'} w'_{j',\bg'}
= 1
$$
$$
\sum_{\bg\in \cC_j \: , \: g_i=1} w_{j,\bg} = \sum_{\bg'\in \cC_j'
\: , \: g'_i=1} w'_{j',\bg'} = c_i \qquad \forall i \in \cN_j
$$
Hence we conclude that $\bc \in \cQ(\cC)$. We have therefore shown
that in this case $\cQ'(\cC) \subseteq \cQ(\cC)$.

Now suppose that $\cG'$ is formed by a general merge up of
constraint nodes in $\cG$. In this case the process of constructing
$\cG'$ from $\cG$ can be separated to a series of merge up steps of
constraint nodes, such that at each step we merge a single set of
parity check nodes from the graph constructed so far in order to
form a new graph. That is, starting from $\cG=\tilde{\cG}_1$, we
first construct $\tilde{\cG}_2$ with the polytope $\tilde{\cQ}_2$.
Then we construct $\tilde{\cG}_3$ with the polytope $\tilde{\cQ}_3$
and so forth, until we obtain $\tilde{\cG}_s=\cG'$ with the polytope
$\tilde{\cQ}_s=\cQ'(\cC)$. For $i=1,2,\ldots,s-1$, the graph
$\tilde{\cG}_{i+1}$ is obtained from $\tilde{\cG}_{i}$ by merging a
single set of constraint nodes into one constraint node and leaving
all the other constraint nodes intact (e.g., in the example shown in
Figure~\ref{fig:merge_example}, $\tilde{\cG}_1=\cG$ is the graph
shown on the left. We first merge $j_1$ and $j_2$ into $j'_1$, thus
forming $\tilde{\cG}_2$. Then we merge $j_3$ and $j_4$ into $j'_2$
thus forming $\tilde{\cG}_3=\cG'$, which is the graph shown on the
right). By what we have shown so far we know that,
$$
\cQ'(\cC) = \tilde{\cQ}_s \subseteq \tilde{\cQ}_{s-1} \subseteq \ldots \subseteq \tilde{\cQ}_1 = \cQ(\cC)
$$
\finproof

Note that when we merge all the constraint nodes into one constraint
node, the resulting polytope, $\cQ^{\rm ML}(\cC)$, is the convex
combination of all possible codewords. That is, the set of vertices
of this polytope is the set of codewords of $\cC$, and the
minimization of $\P(\bc)$ over $\bc \in \cQ^{\rm ML}(\cC)$ yields
the ML decoded codeword. Thus we have $\cQ^{\rm ML}(\cC) \subseteq
\cQ'(\cC) \subseteq \cQ(\cC)$. Note also that for any codeword $\bc$
we have $\bc \in \cQ(\cC)$, $\bc \in \cQ'(\cC)$ and $\bc \in
\cQ^{\rm ML}(\cC)$. That is, $\cQ'(\cC)$ is a finer relaxation of
$\cQ^{\rm ML}(\cC)$ than $\cQ(\cC)$.

Let the error probability $P_e(\bc)$ of the decoder that uses the
initial graph $\cG$, given that the transmitted codeword is $\bc$,
be defined by,
$$
P_e(\bc) = \Pr \left( \exists \btc\in\cQ(\cC) \: : \: \btc \ne \bc, \: \P(\btc) \le \P(\bc) \right)
$$
Similarly, the error probability $P_e(\bc)$ of the decoder that uses the graph $\cG'$, given that the transmitted codeword is $\bc$, is defined by,
$$
P_e'(\bc) = \Pr \left( \exists \btc\in\cQ'(\cC) \: : \: \btc \ne
\bc, \: \P(\btc) \le \P(\bc) \right)
$$

By proposition~\ref{prop:Q2_subset_Q1} we immediately obtain the following,

\begin{proposition}
\label{prop:2_improves_1}
$P_e'(\bc) \le P_e(\bc)$.
\end{proposition}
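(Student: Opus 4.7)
The proof is essentially an immediate corollary of Proposition~\ref{prop:Q2_subset_Q1}, and I would present it as a simple event-inclusion argument rather than anything quantitative.

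First, I would rewrite the two error probabilities in terms of the underlying random events. Let
\[
E \defined \left\{ \exists \btc \in \cQ(\cC) \: : \: \btc \ne \bc, \: \P(\btc) \le \P(\bc) \right\}
\]
and
\[
E' \defined \left\{ \exists \btc \in \cQ'(\cC) \: : \: \btc \ne \bc, \: \P(\btc) \le \P(\bc) \right\},
\]
so that $P_e(\bc)=\Pr(E)$ and $P_e'(\bc)=\Pr(E')$. Both events are functions of the same random channel output (which only enters through the cost vector $\bgamma$ defining $\P(\cdot)$), so they live on a common probability space and may be compared pointwise.

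Next, I would show the set-theoretic inclusion $E' \subseteq E$. Suppose that on some channel realization $E'$ holds, witnessed by a vector $\btc \in \cQ'(\cC)$ with $\btc \ne \bc$ and $\P(\btc) \le \P(\bc)$. By Proposition~\ref{prop:Q2_subset_Q1}, $\cQ'(\cC) \subseteq \cQ(\cC)$, so the very same $\btc$ lies in $\cQ(\cC)$, and the witness conditions $\btc \ne \bc$ and $\P(\btc) \le \P(\bc)$ are unchanged. Hence $E$ holds on this realization as well.

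Finally, from $E' \subseteq E$ I immediately get $\Pr(E') \le \Pr(E)$, i.e.\ $P_e'(\bc) \le P_e(\bc)$, which is the desired conclusion. There is really no main obstacle here: the only substantive ingredient is Proposition~\ref{prop:Q2_subset_Q1}, and once that is in hand the argument is just monotonicity of probability under the inclusion of the feasibility sets that define the ``error-causing competitor'' events.
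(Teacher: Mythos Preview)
Your proof is correct and is exactly the argument the paper has in mind: the paper simply states that the result ``holds trivially due to the fact that the second LP decoder uses a tighter relaxation,'' i.e.\ it invokes Proposition~\ref{prop:Q2_subset_Q1} and leaves the event-inclusion step implicit. You have just written out that step explicitly.
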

This proposition holds trivially due to the fact that the second LP
decoder uses a tighter relaxation of $\cQ^{\rm ML}(\cC)$.

\begin{proposition}
\label{prop:2_equals_1} Suppose that in the merging process of
constraint nodes we only merge constraint nodes $j_1,\ldots,j_r$
such that the subgraph of $\cG$ induced by $j_1,\ldots,j_r$ and
their direct neighbors is cycle free. Then $\cQ(\cC) = \cQ'(\cC)$
and $P_e(\bc) = P_e'(\bc)$.
\end{proposition}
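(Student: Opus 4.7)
The plan is to establish $\cQ(\cC)\subseteq\cQ'(\cC)$, since the reverse inclusion is supplied by Proposition~\ref{prop:Q2_subset_Q1}; the equality $P_e(\bc)=P_e'(\bc)$ will then follow immediately because the cost $\P(\cdot)$ is unchanged by the merge and the feasible sets coincide. As in the proof of Proposition~\ref{prop:Q2_subset_Q1}, I reduce the claim to a single merge step: assume only $j_1,\ldots,j_r$ are merged into $j'$, and the subgraph $\cT$ of $\cG$ induced by $\{j_1,\ldots,j_r\}$ together with their neighbors $\cN_{j'}\defined\bigcup_{i=1}^r \cN_{j_i}$ is a tree. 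A key structural observation is that in $\cT$ any two of the checks can share at most one variable, since a second common neighbor would close a $4$-cycle.

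Given $\bc\in\cQ(\cC)$ with local distributions $\{w_{j_i,\bg}\}$, I construct $\{w'_{j',\bg'}\}_{\bg'\in\cC_{j'}}$ by induction on $r$. The case $r=1$ is trivial. For $r\ge 2$, I pick a leaf check $j_r$ of $\cT$ whose unique variable shared with the other $j_i$'s is some $v^*$ (such a leaf exists because $\cT$ is a finite tree), and fuse the distributions of $j_r$ and the neighboring check $j_{r-1}$ meeting it at $v^*$ via
\[
\tilde w_{(\bg_{r-1},\bg_r)} \defined \frac{w_{j_{r-1},\bg_{r-1}}\, w_{j_r,\bg_r}}{\rho_{v^*}(g_{r-1,v^*})},
\]
for $\bg_{r-1}\in\cC_{j_{r-1}}$, $\bg_r\in\cC_{j_r}$ agreeing at $v^*$ (and zero otherwise), where $\rho_{v^*}(1)\defined c_{v^*}$ and $\rho_{v^*}(0)\defined 1-c_{v^*}$. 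A direct calculation using $\sum_{\bg:g_{v^*}=b} w_{j_i,\bg} = \rho_{v^*}(b)$ shows that $\tilde w$ is non-negative, sums to one, and has the correct single-variable marginals $c_v$ on $\cN_{j_{r-1}}\cup\cN_{j_r}$. Replacing $j_{r-1}$ and $j_r$ in $\cT$ by a single check carrying $\tilde w$ produces a tree on $r-1$ checks (an edge-and-vertex count: the contraction removes one node and one edge, preserving $|V|-|E|=1$ and connectivity), so the inductive hypothesis yields $w'_{j',\bg'}$ on $\cC_{j'}$ with single-variable marginals $c_v$ for all $v\in\cN_{j'}$. Setting $w'_{j,\bg}=w_{j,\bg}$ for the remaining $j\in\cJ\setminus\{j_1,\ldots,j_r\}$ then gives~\eqref{eq:positive_w}--\eqref{eq:ci} on $\cG'$, so $\bc\in\cQ'(\cC)$.

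The main obstacle I anticipate is the degenerate case $c_{v^*}\in\{0,1\}$, in which the denominator of $\tilde w$ formally vanishes. In that situation, however, the marginal constraint on $w_{j_{r-1}}$ and $w_{j_r}$ forces every codeword with positive weight to take the prescribed value at $v^*$, so the numerator of $\tilde w$ also vanishes outside a consistent support; adopting the convention $0/0\defined 0$ keeps $\tilde w$ well-defined and leaves the normalization and marginal identities intact. A secondary routine point is confirming that the formula $\rho_{v^*}(g_{r-1,v^*})=\rho_{v^*}(g_{r,v^*})$ whenever the indicator for $g_{r-1,v^*}=g_{r,v^*}$ is nonzero, which is immediate and is what makes the product decomposition symmetric and correct.
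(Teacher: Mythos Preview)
Your argument is correct and follows essentially the same route as the paper: both reduce to fusing two checks that share at most one variable via the conditional-product formula $w_{j',(\hbg_1,b,\hbg_2)}=w_{j_1,(\hbg_1,b)}\,w_{j_2,(b,\hbg_2)}/\rho_{v^*}(b)$, and then iterate (you peel off a ``leaf'' check by induction on $r$, whereas the paper grows one merged node sequentially). One small gap: the hypothesis is only that the induced subgraph is cycle-free, i.e.\ a forest, so you must also cover the step where the chosen check shares \emph{no} variable with the rest---the paper handles this with the plain product $w_{j',(\bg_1,\bg_2)}=w_{j_1,\bg_1}\,w_{j_2,\bg_2}$, and you should add that case (or reduce to connected components first); conversely, your explicit treatment of the degenerate situation $c_{v^*}\in\{0,1\}$ patches a point the paper leaves implicit.
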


\beginproof
We already know from Proposition~\ref{prop:Q2_subset_Q1} that
$\cQ'(\cC) \subseteq \cQ(\cC)$. Thus we only need to show that
$\cQ(\cC) \subseteq \cQ'(\cC)$. We first prove this assertion for
the case where $\cG'$ is constructed from $\cG$ by merging only a
single pair of constraint nodes, $j_1, j_2 \in \cJ$, of $\cG$. In
this case, by the assumption of the proposition, there are two
possibilities.
\begin{enumerate}
\item
There is no overlap between the variable nodes that are connected to $j_1$ and the variable nodes that are connected to $j_2$. Denote by $\bc_1$ and $\bc_2$ the sub-codewords associated with the variable nodes connected to $j_1$ and $j_2$ respectively. In this case there is no overlap between $\bc_1$ and $\bc_2$. This case is illustrated in Figure~\ref{fig:merge_zero_overlap}.
\item
Exactly one variable node is connected to both $j_1$ and $j_2$, that is there is an overlap of one bit, $c_{1,2}$, between $\bc_1$ and $\bc_2$. This case is illustrated in Figure~\ref{fig:merge_one_overlap}.
\end{enumerate}

\begin{figure}[hbtp]
\begin{center}
\begin{tikzpicture}
[var/.style={circle,draw=black!50,fill=red!40},
par/.style={rectangle,draw=black!50,fill=blue!40},
txt/.style={},
txt2/.style={scale=2},
txt3/.style={scale=3},
txt4/.style={scale=4},
txt5/.style={scale=5},
txt6/.style={scale=6},
]

\node at   (0,7) [txt2] {$\vdots$};
\node (v1) at (0,6) [var] {};
\node (v2) at (0,5) [var] {};
\node (v3) at (0,4) [var] {};
\node (v4) at (0,3) [var] {};
\node (v5) at (0,2) [var] {};
\node (v6) at (0,1) [var] {};
\node at   (0,0.5) [txt2] {$\vdots$};

\node at   (1.5,6) [txt2] {$\vdots$};
\node (p1) at (1.5,5) [par] {};
\node (p2) at (1.5,2) [par] {};
\node at   (1.5,1.5) [txt2] {$\vdots$};

\node at   (2.,5) [txt] {$j_1$};
\node at   (2.,2) [txt] {$j_2$};

\node (m) at   (4,3.5) [txt] {merge and create $j'$};
\draw[<-]  (2.3,4.8) -- (m);
\draw[<-]  (2.3,2.2) -- (m);

\draw[-] (v6) -- (p2);
\draw[-] (v5) -- (p2);
\draw[-] (v4) -- (p2);

\draw[-] (v3) -- (p1);
\draw[-] (v2) -- (p1);
\draw[-] (v1) -- (p1);

\node at   (-1.5,5) [txt] {$\bc_1$};
\node at   (-.8,5)  [txt6] {$\{$};

\node at   (-1.5,2) [txt] {$\bc_2$};
\node at   (-.8,2)  [txt6] {$\{$};

\end{tikzpicture}
\end{center}
\caption{Merging two parity nodes, $j_1$ and $j_2$, in the Tanner graph of the code $\cC$. In this case there is no overlap between $\bc_1$ and $\bc_2$.
\label{fig:merge_zero_overlap}}
\end{figure}

\begin{figure}[hbtp]
\begin{center}
\begin{tikzpicture}
[var/.style={circle,draw=black!50,fill=red!40},
par/.style={rectangle,draw=black!50,fill=blue!40},
txt/.style={},
txt2/.style={scale=2},
txt3/.style={scale=3},
txt4/.style={scale=4},
txt5/.style={scale=5},
txt6/.style={scale=6},
]

\node at   (0,6.) [txt2] {$\vdots$};
\node (v1) at (0,5) [var] {};
\node (v2) at (0,4) [var] {};
\node (v3) at (0,3) [var] {};
\node (v4) at (0,2) [var] {};
\node (v5) at (0,1) [var] {};
\node at   (0,0.5) [txt2] {$\vdots$};

\node at   (1.5,5.) [txt2] {$\vdots$};
\node (p1) at (1.5,4) [par] {};
\node (p2) at (1.5,2) [par] {};
\node at   (1.5,1.5) [txt2] {$\vdots$};

\node at   (2.,4) [txt] {$j_1$};
\node at   (2.,2) [txt] {$j_2$};

\node (m) at   (4,3) [txt] {merge and create $j'$};
\draw[<-]  (2.3,3.8) -- (m);
\draw[<-]  (2.3,2.2) -- (m);

\draw[-] (v1) -- (p1);
\draw[-] (v2) -- (p1);
\draw[-] (v3) -- (p1);

\draw[-] (v3) -- (p2);
\draw[-] (v4) -- (p2);
\draw[-] (v5) -- (p2);

\node at   (-1.5,4) [txt] {$\bc_1$};
\node at   (-.8,4)  [txt6] {$\{$};

\node at   (-1.5,2) [txt] {$\bc_2$};
\node at   (-.8,2)  [txt6] {$\{$};

\node at   (0.5,2.95) [txt] {$c_{1,2}$};

\end{tikzpicture}
\end{center}
\caption{Merging two parity nodes, $j_1$ and $j_2$, in the Tanner graph of the code $\cC$. In this case there is an overlap of one bit, $c_{1,2}$, between $\bc_1$ and $\bc_2$.
\label{fig:merge_one_overlap}}
\end{figure}
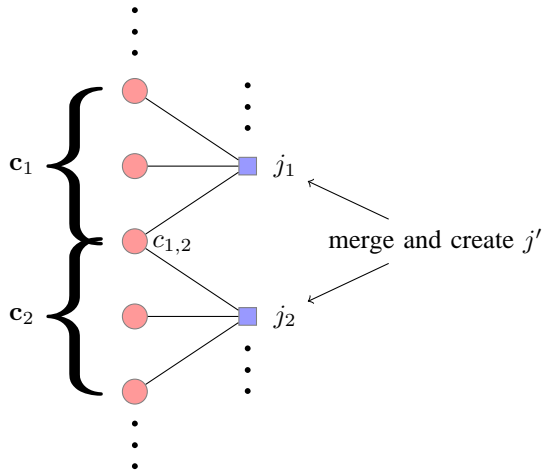

Now consider the first case (Figure~\ref{fig:merge_zero_overlap}). Suppose that $\bc \in \cQ(\cC)$. Then for $j=j_1, j_2$ there exists a set of weights $\{ w_{j,\bg} \ge 0 \}$ for $\bg\in \cC_j$ such that
$$
\sum_{\bg \in \cC_j} w_{j,\bg} = 1
$$
and
$$
c_i = \sum_{\bg\in \cC_j \: , \: g_i=1} w_{j,\bg} \qquad \forall \: i \in \cN_j
$$
Define the set $\{ w_{j',\bg} \}$ for $\bg=(\bg_1,\bg_2)$, where $\bg_1$ and $\bg_2$ correspond to the sub-codewords $\bc_1$ and $\bc_2$ respectively in Figure~\ref{fig:merge_zero_overlap}, by
$$
w_{j',(\bg_1,\bg_2)} \defined w_{j_1,\bg_1} \cdot w_{j_2,\bg_2}
$$
It follows from this definition that $w_{j',(\bg_1,\bg_2)} \ge 0$, and
$$
\sum_{\bg \in \cC_j'} w_{j',\bg} = \sum_{\bg_1 \in \cC_{j_1}}
\sum_{\bg_2 \in \cC_{j_2}} w_{j_1,\bg_1} w_{j_2,\bg_2} = 1
$$
Furthermore, for $i\in\cN_{j_1}$
\begin{multline*}
\sum_{\substack{\bg\in \cC_j' \\ g_i=1}} w_{j',\bg} =
\sum_{\substack{(\bg_1,\bg_2) \in \cC_j' \\ g_{1,i}=1}}
w_{j',(\bg_1,\bg_2)} \\ = \sum_{\substack{\bg_1 \in \cC_{j_1} \\
g_{1,i}=1}} w_{j_1,\bg_1} \sum_{\bg_2 \in \cC_{j_2}} w_{j_2,\bg_2} =
c_i
\end{multline*}
where $g_{1,i}$ ($g_{2,i}$, respectively) is the $i$'th component of $g_1$ ($g_2$). Similarly, for $i\in\cN_{j_2}$
$$
\sum_{\bg\in \cC_j' \: , \: g_i=1} w_{j',\bg} = c_i
$$
This shows that in the case considered $\bc \in \cQ'(\cC)$.

Next consider the second case (Figure~\ref{fig:merge_one_overlap}). Suppose that $\bc \in \cQ(\cC)$. Then for $j=j_1, j_2$ there exists a set of weights $\{ w_{j,\bg} \ge 0 \}$ for $\bg\in \cC_j$ such that
\bre
\sum_{\bg \in \cC_j} w_{j,\bg} = 1
\label{eq:w_sum_1_oo}
\ere
and
\bre
c_i = \sum_{\bg\in \cC_j \: , \: g_i=1} w_{j,\bg} \qquad \forall \: i \in \cN_j
\label{eq:ci_oo}
\ere
Let $\hbc_1$ denote the sub-codeword $\bc_1$ excluding the common bit $c_{1,2}$. Similarly $\hbc_2$ is the sub-codeword $\bc_2$ excluding $c_{1,2}$.
Define the set $\{ w_{j',\bg} \}$ for $\bg=(\hbg_1,g_{1,2},\hbg_2)$, where $\hbg_1$, $g_{1,2}$ and $\hbg_2$ correspond to $\hbc_1$, $c_{1,2}$ and $\hbc_2$ respectively, by
\begin{multline*}
w_{j',\bg} = w_{j',(\hbg_1,g_{1,2},\hbg_2)} \defined \\ \left\{
  \begin{array}{ll}
    w_{j_1,(\hbg_1,1)} \cdot w_{j_2,(1,\hbg_2)} / c_{1,2} & \hbox{if $g_{1,2}=1$} \\
    w_{j_1,(\hbg_1,0)} \cdot w_{j_2,(0,\hbg_2)} / (1-c_{1,2}) & \hbox{if $g_{1,2}=0$}
  \end{array}
\right.
\end{multline*}
It follows from this definition that
$w_{j',\bg} \ge 0$, and
\begin{align*}
&\sum_{\bg \in \cC_j'} w_{j',\bg} \\ &=\frac{1}{1-c_{1,2}}
\sum_{\hbg_1 \: : \: (\hbg_1,0) \in \cC_{j_1}} \sum_{\hbg_2 \: : \:
(0,\hbg_2) \in \cC_{j_2}} w_{j_1,(\hbg_1,0)} w_{j_2,(0,\hbg_2)}
\\
&+ \frac{1}{c_{1,2}} \sum_{\hbg_1 \: : \: (\hbg_1,1) \in \cC_{j_1}}
\sum_{\hbg_2 \: : \: (1,\hbg_2) \in \cC_{j_2}} w_{j_1,(\hbg_1,1)}
w_{j_2,(1,\hbg_2)} \\ &= 1
\end{align*}
where the last equality follows from the following equalities, which
are implied by~\eqref{eq:w_sum_1_oo} and~\eqref{eq:ci_oo},
\begin{align}
c_{1,2}
&=
\sum_{\hbg_1 \: : \: (\hbg_1,1) \in \cC_{j_1}} w_{j_1,(\hbg_1,1)}
=
\sum_{\hbg_2 \: : \: (1,\hbg_2) \in \cC_{j_2}} w_{j_2,(1,\hbg_2)}
\label{eq:c12} \\
1-c_{1,2}
&=
\sum_{\hbg_1 \: : \: (\hbg_1,0) \in \cC_{j_1}} w_{j_1,(\hbg_1,0)}
=
\sum_{\hbg_2 \: : \: (0,\hbg_2) \in \cC_{j_2}} w_{j_2,(0,\hbg_2)}
\label{eq:1mc12}
\end{align}

It remains to show that \bre \label{eq:eqtoci} \sum_{\bg\in \cC_j'
\: , \: g_i=1} w_{j',\bg} = c_i \ere This is shown by considering
the following three possibilities for $i$: $i\in\cN_{j_1} \cap
\cN_{j_2}$, $i \in \cN_{j_1} \setminus \cN_{j_2}$ and $i \in
\cN_{j_2} \setminus \cN_{j_1}$. In the first case, $i\in\cN_{j_1}
\cap \cN_{j_2}$, we have $c_i = c_{1,2}$. In this case
\begin{align*}
\sum_{\substack{\bg\in \cC_j' \\ g_i=1}} w_{j',\bg} &=
\sum_{(\hbg_1,1,\hbg_2) \in \cC_j'} \frac{1}{c_{1,2}}
w_{j_1,(\hbg_1,1)} \cdot w_{j_2,(1,\hbg_2)}
\\ &=
\frac{1}{c_{1,2}} \sum_{\hbg_1 \: : \: (\hbg_1,1) \in \cC_{j_1}} w_{j_1,(\hbg_1,1)} \sum_{\hbg_2 \: : \:  (1,\hbg_2) \in \cC_{j_2}} w_{j_2,(1,\hbg_2)}
\\ &=
c_{1,2}
\end{align*}
where the last equality follows from~\eqref{eq:c12}.

In the second case, $i \in \cN_{j_1} \setminus \cN_{j_2}$, we have
\begin{align*}
\sum_{\substack{\bg\in \cC_j' \\ g_i=1}} w_{j',\bg} &=
\sum_{\substack{(\hbg_1,1,\hbg_2) \in \cC_j' \\ g_{1,i}=1}}
\frac{w_{j_1,(\hbg_1,1)} \cdot w_{j_2,(1,\hbg_2)}}{c_{1,2}}
\\ &+
\sum_{\substack{(\hbg_1,0,\hbg_2) \in \cC_j' \\ g_{1,i}=1}}
\frac{w_{j_1,(\hbg_1,0)} \cdot w_{j_2,(0,\hbg_2)}}{1-c_{1,2}}
\\ &=
\sum_{\substack{\hbg_1 \: : \: (\hbg_1,1) \in \cC_{j_1}
\\ g_{1,i}=1}} \frac{w_{j_1,(\hbg_1,1)}}{c_{1,2}}
\sum_{\hbg_2 \: : \: (1,\hbg_2) \in \cC_{j_2}} w_{j_2,(1,\hbg_2)}
\\ &+
\sum_{\substack{\hbg_1 \: : \: (\hbg_1,0) \in \cC_{j_1}
\\ g_{1,i}=1}} \frac{w_{j_1,(\hbg_1,0)}}{1-c_{1,2}}
\sum_{\hbg_2 \: : \: (0,\hbg_2) \in \cC_{j_2}} w_{j_2,(0,\hbg_2)}
\\ &=
\sum_{\substack{\hbg_1 \: : \: (\hbg_1,1) \in \cC_{j_1} \\
g_{1,i}=1}} w_{j_1,(\hbg_1,1)} + \sum_{\substack{\hbg_1 \: : \:
(\hbg_1,0) \in \cC_{j_1} \\ g_{1,i}=1}} w_{j_1,(\hbg_1,0)}
\\ &= c_i
\end{align*}
where the third equality is due to~\eqref{eq:c12} and~\eqref{eq:1mc12}, and the fourth equality is due to~\eqref{eq:ci_oo} (with $j=j_1$).
The proof that~\eqref{eq:eqtoci} holds in the third case, $i \in \cN_{j_2} \setminus \cN_{j_1}$, is identical. Thus in all three cases considered~\eqref{eq:eqtoci} holds, and we conclude that $\bc \in \cQ'(\cC)$ as claimed.

We thus conclude that when we merge a single pair of constraint
nodes, $\cQ(\cC) = \cQ'(\cC)$. We proceed to prove that this holds
for a general merge up of a set of constraint nodes
$V=\{j_1,\ldots,j_r\}$, for which the subgraph of $\cG$ induced by
the nodes in $V$ and their direct neighbors is cycle free, into one
node $j'$. We denote the new graph by $\cG'$. We start from the
original graph $\cG$ with the corresponding polytope
$\cQ(\cC)=\cQ^{(1)}(\cC)$, and select a constraint node
$\tilde{j}\in V$ arbitrarily. Now we select a node to merge with
node $\tilde{j}$. If there exists some node $\tilde{j}'\in V$ such
that $\cN_{\tilde{j}} \cap \cN_{\tilde{j}'} \neq \emptyset$ then
$\tilde{j}'$ is selected to be merged with $\tilde{j}$. Otherwise,
$\tilde{j}'\in V$ is picked arbitrarily. Now $\tilde{j}$ and
$\tilde{j}'$ are merged into a new constraint node
$\tilde{j}^{(1)}$. After performing the merging operation on these
two nodes, a new graph $\cG^{(2)}$ is formed with the corresponding
polytope, $\cQ^{(2)}(\cC)$. Now, in $\cG$, the two constraint nodes
selected for merging either have no common variable node or are
connected to a single common variable node. By what we have already
shown, we know that $\cQ^{(1)}(\cC) = \cQ^{(2)}(\cC)$. Now we repeat
the process. If there is a node $\tilde{j}''$ within the remaining
nodes in $V$ such that $\cN_{\tilde{j}^{(1)}} \cap \cN_{\tilde{j}''}
\neq \emptyset$ then $\tilde{j}''$ is selected to be merged with
$\tilde{j}^{(1)}$, otherwise $\tilde{j}''$ is selected arbitrarily.
Once $\tilde{j}''$ is chosen, it is merged with $\tilde{j}^{(1)}$
into a new constraint node $\tilde{j}^{(2)}$, forming a new graph
$\cG^{(3)}$ with its corresponding polytope, $\cQ^{(3)}(\cC)$. By
this selection process, in $\cG^{(2)}$, the two constraint nodes
selected for merging again have either no common variable node or
are connected to a single common variable node, and thus
$\cQ^{(2)}(\cC) = \cQ^{(3)}(\cC)$. This merge up process continues
until we have created the graph $\cG' = \cG^{(r)}$ with polytope
$\cQ'(\cC) = \cQ^{(r)}(\cC)$. Thus we have shown that,
$$
\cQ{(\cC)} = \cQ^{(1)}(\cC) = \cQ^{(2)}(\cC) = \ldots =
\cQ^{(r)}(\cC) = \cQ'(\cC)
$$

The proof of the general case, where we merge several
non-overlapping groups of variable nodes in $\cG$ in order to form
$\cG'$ (e.g., in the example shown in
Figure~\ref{fig:merge_example}, where we merge $j_1$ and $j_2$ into
$j'_1$, and $j_3$ and $j_4$ into $j'_2$), now follows immediately.
Thus we have shown that $\cQ(\cC) = \cQ'(\cC)$. This implies that
$P_e(\bc) = P_e'(\bc)$. \finproof

Proposition~\ref{prop:2_equals_1} implies the following. Consider
some graphical representation $\cG$ of a code $\cC$ and suppose that
we wish to obtain a new graphical representation $\cG'$ by merging
constraint nodes such that when using LP decoding, the error
probability of $\cG'$ is smaller than the error probability of
$\cG$. Then, when considering candidate constraint nodes for merge
up, it is sufficient to consider only constraint nodes
$j_1,\ldots,j_r$ such that the subgraph of $\cG$ induced by
$j_1,\ldots,j_r$ and their direct neighbors contains a cycle.

In the sequel, we will use a variant of
Proposition~\ref{prop:2_equals_1}, where upon merging nodes
$j_1,j_2,\dots,j_r$ into $j'$, we keep the original nodes
$j_1,j_2,\dots,j_r$ as well as the new node $j'$ (the reason for
this is that we want to enable some of the nodes $j_1,j_2,\dots,j_r$
to be merged with other nodes). Due to
Propositions~\ref{prop:Q2_subset_Q1} and~\ref{prop:2_equals_1},
keeping $j_1,j_2,\dots,j_r$ has no effect on the resulting
$\cQ'(\cC)$ and $P_e'(\bc)$.

\section{Bound on the Minimum and Fractional Distance of Specific Codes} \label{sec:min distance bound}
It was already shown in \cite{lpdecode} that LP decoding can be used
to obtain the fractional distance, which is also a lower bound on
the minimum distance of specific linear codes, in polynomial time
complexity. In this section, we show how Algorithm~\ref{alg} can be
used to obtain a lower bound on the minimum distance of a given
code, and that this bound is also an upper bound on the fractional
distance. Assuming the degree of the check nodes is bounded by a
constant independent of $N$, this procedure requires execution of
Algorithm~\ref{alg} a number of times proportional to $N$, and thus
the total complexity is $O(N^2)$.

Let $r \in \cJ$ be a check node and let $\bc_r$ be some nonzero
local codeword of $r$. The idea is to search for a minimum-weight
vector subject to the constraint that $\bc_r$ is the local codeword
on node $r$ (later we will show how such vectors allow to obtain a
lower bound on the minimum distance). This conforms to the setting
of Problem-P, if we set $\gamma_i=1$ for all $i$. Denote $\cI_r
\triangleq \cI \backslash \cN_r$ and $\cJ_r \triangleq \cJ
\backslash \{r\}$. We would like to obtain the solution by running
Algorithm~\ref{alg} on a modified version of the graph in which we
remove the node $r$, all its neighbors $\cN_r$ and the edges
connected to these neighbors. However, in order to maintain a
correct representation of the code, we must account for the specific
nonzero local codeword $\bc_r$. To do this, we look at all
constraint nodes $j \in \cJ_r$ where $\cN_j \cap \cN_r \neq
\emptyset$. Let $j \in \cJ_r$ be such a constraint node described by
a matrix $H_j$ with columns $\{\bh_i\}_{i \in \cN_j}$; this
situation is exemplified in Figure~\ref{fig: coset codes a}. Since
we will be forcing the variables in $\cN_j \bigcap \cN_r$ to a value
depending on $\bc_r$, we have that the local codeword vector $\bc_j
\in \cC_j$ on node $j$ satisfies \bre \label{eq: h_r,S0} \sum_{i \in
\cN_j \setminus \cN_r} c_i \bh_i= \sum_{i \in \cN_j \bigcap \cN_r}
c_i \bh_i \triangleq \tilde{\bh}_j^{r,\bc_r} \ere We now remove
parity check $r$ and all its neighbors from the graph. Consequently,
in the remaining graph we observe that check node $j$ describes a
coset code where a standard constraint of the form $H_j\bc_j=0$ is
replaced by the constraint $H_j^r\bc_j^r=\tilde{\bh}_j^{r,\bc_r}$,
where $H^r_j=\{\bh_i\}_{i \in \cN_j \backslash \cN_r}$ and
$\bc^r_j=\{c_i\}_{i \in \cN_j \backslash \cN_r}$. This is
exemplified in Figure~\ref{fig: coset codes b}. Denote the set of
local codewords of this coset code by $\cC_j^{r,\bc_r}$. Also denote
$\bc^{(r)}=\{c_i\}_{i\in \cI_r}$,
$\bomega^{(r)}=\{\omega_{j,\bg}\}_{j\in \cJ_r, \bg \in
\cC_j^{r,\bc_r}}$. Problem-$\text{P}^{r,\bc_r}$ is defined as
follows. \bre \min_{\bc^{(r)},\bomega^{(r)}} \sum_{i\in\cI_r} c_i
\label{eq:ml_lp} \ere subject to \bre w_{j,\bg} \ge 0 \qquad \forall
j \in \cJ_r \: , \: \bg \in \cC_j^{r,\bc_r}
\label{eq:positive_w_reduced} \ere \bre \sum_{\bg \in
\cC_j^{r,\bc_r}} w_{j,\bg} = 1 \qquad \forall j \in \cJ_r
\label{eq:w_sum_1_reduced} \ere \bre c_i = \sum_{\bg\in
\cC_j^{r,\bc_r} \: , \: g_i=1} w_{j,\bg} \qquad \forall j\in\cJ_r \:
, \: i \in \cN_j \setminus \cN_r \: . \label{eq: problem p_rs0,4}
\ere

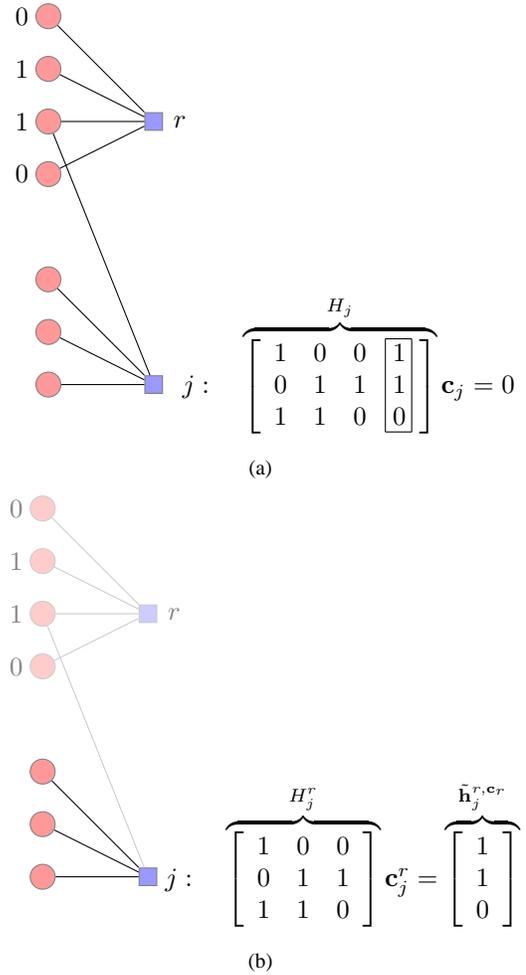
\begin{figure}[hbtp] \label{fig: coset codes}
\begin{center}
\subfigure[]{ \label{fig: coset codes a}
\begin{tikzpicture}
[var/.style={circle,draw=black!50,fill=red!40},
varg/.style={circle,draw=black!10,fill=red!10},
parg/.style={circle,draw=black!10,fill=red!10},
par/.style={rectangle,draw=black!50,fill=blue!40}, txt/.style={},
txt2/.style={scale=2}, txt3/.style={scale=3}, txt4/.style={scale=4},
txt5/.style={scale=5}, txt6/.style={scale=6}, scale=0.7]

\node (vp1)  at (5,10)  [var] {}; \node (vp2)  at (5,9)   [var] {};
\node (vp3)  at (5,8)   [var] {}; \node (vp4)  at (5,7)   [var] {};
\node (vp5)  at (5,5)   [var] {}; \node (vp6)  at (5,4)   [var] {};
\node (vp7)  at (5,3)   [var] {}; \node (pp1) at (7,8)   [par] {};
\node (pp2) at (7,3) [par] {};

\node at (4.5,10)   [txt] {$0$}; \node at (4.5,9)   [txt] {$1$};
\node at (4.5,8)   [txt] {$1$}; \node at (4.5,7)   [txt] {$0$};
\node at (7.5,8)   [txt] {$r$}; \node at (10.7,3.37) [txt] {$j:
\quad
\overbrace{\left[\begin{array}{l l l l} 1 & 0 & 0 & 1 \\ 0 & 1 & 1 & 1 \\
1 & 1 & 0 & 0\end{array}\right]}^{H_j}\bc_j=0$};

\draw[-] (11.4,3.87) -- (11.9,3.87); \draw[-] (11.4,2.1) --
(11.9,2.1); \draw[-] (11.4,3.87) -- (11.4,2.1); \draw[-] (11.9,3.87)
-- (11.9,2.1); \draw[-] (vp1) -- (pp1); \draw[-] (vp2) -- (pp1);
\draw[-] (vp3) -- (pp1); \draw[-] (vp4) -- (pp1);

\draw[-] (vp3) -- (pp2); \draw[-] (vp5) -- (pp2); \draw[-] (vp6) --
(pp2); \draw[-] (vp7) -- (pp2);
\end{tikzpicture} }
\subfigure[]{ \label{fig: coset codes b}
\begin{tikzpicture}
[var/.style={circle,draw=black!50,fill=red!40},
varg/.style={circle,draw=black!20,fill=red!20},
parg/.style={rectangle,draw=black!20,fill=blue!20},
weakline/.style={draw=black!20},
par/.style={rectangle,draw=black!50,fill=blue!40}, txt/.style={},
txtweak/.style={color=gray},  txt2/.style={scale=2},
txt3/.style={scale=3}, txt4/.style={scale=4}, txt5/.style={scale=5},
txt6/.style={scale=6}, scale=0.7]

\node (vp1)  at (5,10)  [varg] {}; \node (vp2)  at (5,9)   [varg]
{}; \node (vp3)  at (5,8)   [varg] {}; \node (vp4)  at (5,7)
[varg] {}; \node (vp5)  at (5,5)   [var] {}; \node (vp6)  at (5,4)
[var] {}; \node (vp7)  at (5,3)   [var] {}; \node (pp1) at (7,8)
[parg] {}; \node (pp2) at (7,3) [par] {};

\node at (4.5,10)   [txtweak] {$0$}; \node at (4.5,9)   [txtweak]
{$1$}; \node at (4.5,8)   [txtweak] {$1$}; \node at (4.5,7)
[txtweak] {$0$}; \node at (7.5,8)   [txtweak] {$r$}; \node at
(10.7,3.43) [txt] {$j: \quad \overbrace{\left[\begin{array}{l l l} 1
& 0 & 0 \\ 0 & 1 & 1
\\ 1 & 1 & 0
\end{array}\right]}^{H^r_j}\bc^r_j=\overbrace{\left[\begin{array}{l} 1
\\ 1 \\ 0 \end{array} \right]}^{\tilde{\bh}_j^{r,\bc_r}}$};

\draw[weakline] (vp1) -- (pp1); \draw[weakline] (vp2) -- (pp1);
\draw[weakline] (vp3) -- (pp1); \draw[weakline] (vp4) -- (pp1);

\draw[weakline] (vp3) -- (pp2); \draw[-] (vp5) -- (pp2); \draw[-]
(vp6) -- (pp2); \draw[-] (vp7) -- (pp2); 

\end{tikzpicture} }
\end{center}
\caption{Coset codes formed when a codeword $\bc_r=(0,1,1,0)$ is
forced on constraint node $r$ which is removed. (a) The rightmost
column of $H_j$ corresponds to a removed variable node forced to a
'$1$' value. (b) The new code $\cC_j^{r,\bc_r}$ is formed as $r$ and
$\cN_r$ are removed from the graph.}
\end{figure}


Thus, by setting $\gamma_i=1$ $\forall i \in \cI$,
Problem-$\text{P}^{r,\bc_r}$ is easily seen to be an instance of
Problem-P, and thus its solution can be approximated arbitrarily
closely by Algorithm~\ref{alg} in linear time. The only modification
we need to make is to account for the coset codes
$\{\cC_j^{r,\bc_r}\}$, and this can be accomplished by replacing the
definition of $A_{k,j}$ and $B_{k,j}$ in \eqref{eq:B_kj_A_kj} by

\bra  A_{k,j} &\defined& \sum_{\bg\in \cC_j^{r,\bc_r}, \: g_k=1}
e^{-K\sum_{i\in \cN_j \setminus (\cN_r \cup \{k\})} u_{i,j} g_i} \nonumber \\
B_{k,j} &\defined& \sum_{\bg\in \cC_j^{r,\bc_r}, \: g_k=0}
e^{-K\sum_{i\in \cN_j \setminus (\cN_r \cup \{k\})} u_{i,j} g_i}
\label{eq:B_kj_A_kjreduced} \era These can be computed efficiently
for any coset code $\cC_j^{r,\bc_r}$ using the method described in
Section~\ref{sec: Akj and Bkj}. However, instead of
\eqref{eq:B_k_A_k} we now have (note that for simplicity we omit the
dependence on $j$) \bra \label{eq: B_k_A_k new} A_{k} &\defined&
\sum_{\bg \: : \: H^r \bg = \tilde{\bh}^{r,\bc_r}, \: g_k=1}
e^{-K \left( \sum_{i=0}^{n-1} u_{i} g_i - u_k \right)} \nonumber \\
B_{k} &\defined& \sum_{\bg \: : \: H^r \bg = \tilde{\bh}^{r,\bc_r},
\: g_k=0} e^{-K \sum_{i=0}^{n-1} u_{i} g_i} \: . \era It can be seen
that the only alteration we need to make is to replace \eqref{eq:
calcAk} and \eqref{eq: calcBk} by \bre A_k = e^{K u_k} \sum_{\bs}
E(k,\bs,1) \tilde{E}(k,\bs \oplus \tilde{\bh}^{r,\bc_r}) \label{eq:
calcAk coset}\ere and \bre B_k = \sum_{\bs} E(k,\bs,0)
\tilde{E}(k,\bs \oplus \tilde{\bh}^{r,\bc_r}) \label{eq: calcBk
coset} \ere for $k=0,\ldots,n-1$. For the calculation of $\D(\bu)$
of a coset code, we can appropriately modify and use the efficient
method from Section~\ref{sec: calculation of D(u)}. The necessary
modification for a coset code characterized by a vector
$\tilde{\bh}^{r,\bc_r}$, is that we need to recursively calculate
\bre \label{eq: coset code D(u)} A(n-1,\tilde{\bh}^{r,\bc_r}) \ere
rather than $A(n-1,0)$.

Solving Problem-$\text{P}^{r,\bc_r}$ thus finds an approximate
minimum-weight vector $\bhc$ subject to enforcing a local codeword on parity check $r$. 
We repeatedly apply this method, solving
Problem-$\text{P}^{r,\bc_r}$ over all nonzero local codewords
$\bc_r\in \cC_r \backslash \{\bzero_r\}$, and for each local
codeword we record both a lower bound
$\underline{l^r_{\bc_r}}=\D(\bu)$ and an upper bound
$\overline{l^r_{\bc_r}}=\P(\btlambda)$ (see \eqref{eq: sandwich}) on
the optimal value. In this manner, we create two sets of values
$\{\underline{l^r_{\bc_r}}\}_{\bc_r \in \cC_r \backslash
\{\bzero_r\} }$ and $\{\overline{l^r_{\bc_r}}\}_{\bc_r \in \cC_r
\backslash \{\bzero_r\} }$. Next, we calculate the minima \bre
\label{eq: l^rmin underline} \underline{l^r_{\text{min}}}
\defined \min_{\bc_r \in \cC_r \backslash \{\bzero_r\}} \underline{l^r_{\bc_r}} \ere
and \bre \label{eq: l^rmin overline} \overline{l^r_{\text{min}}}
\defined \min_{\bc_r \in \cC_r \backslash \{\bzero_r\}} \overline{l^r_{\bc_r}} \ere
Finally, the entire procedure is repeated for all $r \in \cJ$ and we
calculate \bre \label{eq: lmin} \underline{l_{\min}}\triangleq
\min_{r \in \cJ} \{\underline{l^r_{\text{min}}}\} \;, \quad
\overline{l_{\min}}\triangleq \min_{r \in \cJ}
\{\overline{l^r_{\text{min}}}\}\ere We refer to this process as {\em
Algorithm~\ref{alg lower bound}}, which is summarized in what
follows.

\begin{algorithm}
\label{alg lower bound} Given a GLDPC code, do:
\begin{itemize}
\item
{\bf loop over $r \in \cJ$}
\begin{itemize}
\item
{\bf loop over $\bc_r \in \cC_r \backslash \{\bzero_r\}$}
\begin{itemize}
\item
{\bf loop over $i \in \cN_r$ and $j \in \cN_i$}
\begin{itemize}
\item Compute $\tilde{\bh}_j^{r,\bc_r}$ using \eqref{eq: h_r,S0}
\item Define $H^r_j = \{\bh_i\}_{i \in \cN_j \backslash
\cN_r}$
\item Compute all codewords of the local coset code
$\cC_j^{r,\bc_r}$ defined by $H^r_j \tilde{\bc}_r =
\tilde{\bh}_j^{r,\bc_r}$
\end{itemize}
\item Run Algorithm~\ref{alg} to solve
Problem-$\text{P}^{r,\bc_r}$, defined by \eqref{eq:ml_lp}-\eqref{eq:
problem p_rs0,4}. While running the algorithm, use the definitions
of $A_{k,j}$ and $B_{k,j}$ in \eqref{eq:B_kj_A_kjreduced} instead of
those in \eqref{eq:B_kj_A_kj}
\item Set $\underline{l^r_{\bc_r}}=\D(\bu)$ and $\overline{l^r_{\bc_r}}=\P(\btlambda)$ where the
vectors $\bu$ and $\btlambda$ are output by Algorithm~\ref{alg}.
Calculate $\D(\bu)$ using \eqref{eq:lp_dual} or by the more
efficient method in Section~\ref{sec: calculation of D(u)}, with
\eqref{eq: coset code D(u)} instead of \eqref{eq: D(u) efficient}.
\end{itemize}
    \item Calculate $\underline{l^r_{\text{min}}}=\min_{\bc_r \in \cC_r \backslash \{\bzero_r\}}
    \underline{l^r_{\bc_r}}$ and $\overline{l^r_{\text{min}}}=\min_{\bc_r \in \cC_r \backslash \{\bzero_r\}}
    \overline{l^r_{\bc_r}}$
\end{itemize}
\item Output $\underline{l_{\min}}=\min_{r
\in \cJ} \{\underline{l^r_{\text{min}}}\}$ and
$\overline{l_{\min}}=\min_{r \in \cJ}
\{\overline{l^r_{\text{min}}}\}$
\end{itemize}
\end{algorithm}

\begin{proposition}
$\underline{l_{\min}}$, output by Algorithm~\ref{alg lower bound},
is a lower bound on the minimum distance of the code.
\end{proposition}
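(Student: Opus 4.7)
The plan is to show that every nonzero codeword $\bc^* \in \cC$ satisfies $\sum_{i \in \cI} c^*_i \ge \underline{l_{\min}}$, which is exactly the claim $\underline{l_{\min}} \le d_{\min}$. Given a nonzero $\bc^*\in\cC$, I would pick any $i^* \in \cI$ with $c^*_{i^*}=1$ and any $r \in \cN_{i^*}$; then the local restriction $\bc^*_r \defined \{c^*_i\}_{i\in \cN_r}$ is a nonzero element of $\cC_r$ and therefore appears as one of the choices in the inner loop of Algorithm~\ref{alg lower bound} for this $r$. The strategy is to exhibit $\bc^*$ as a feasible integer point of Problem-$\text{P}^{r,\bc^*_r}$, read off its objective value, and then invoke weak duality.

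For feasibility, I would set $\bc^{(r)} = \{c^*_i\}_{i\in\cI_r}$, and for each $j \in \cJ_r$ define $\bc^{*,r}_j \defined \{c^*_i\}_{i\in \cN_j \setminus \cN_r}$. The global parity check $H_j \bc^*_j = 0$ splits as
\bre
H^r_j \bc^{*,r}_j + \sum_{i\in\cN_j \cap \cN_r} c^*_i \bh_i = 0,
\ere
which combined with the definition \eqref{eq: h_r,S0} (evaluated at $\bc_r = \bc^*_r$) yields $H^r_j \bc^{*,r}_j = \tilde{\bh}_j^{r,\bc^*_r}$, so $\bc^{*,r}_j \in \cC_j^{r,\bc^*_r}$. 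Setting $w_{j,\bc^{*,r}_j}=1$ and $w_{j,\bg}=0$ for $\bg \neq \bc^{*,r}_j$ then satisfies \eqref{eq:positive_w_reduced}--\eqref{eq: problem p_rs0,4}, so $(\bc^{(r)},\bomega^{(r)})$ is feasible for Problem-$\text{P}^{r,\bc^*_r}$.

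The objective value at this feasible point equals $\sum_{i \in \cI_r} c^*_i = \sum_{i\in\cI} c^*_i - \sum_{i\in\cN_r} c^*_i \le \sum_{i\in\cI} c^*_i$. Letting $\P^*_{r,\bc^*_r}$ denote the optimum of Problem-$\text{P}^{r,\bc^*_r}$, this bounds $\P^*_{r,\bc^*_r} \le \sum_{i\in\cI} c^*_i$; weak duality (cf.\ \eqref{eq: sandwich}) then gives $\underline{l^r_{\bc^*_r}} = \D(\bu) \le \P^*_{r,\bc^*_r}$. Combining with the definitions \eqref{eq: l^rmin underline}--\eqref{eq: lmin} yields $\underline{l_{\min}} \le \underline{l^r_{\bc^*_r}} \le \sum_{i\in\cI} c^*_i$, and minimizing over all nonzero $\bc^*\in\cC$ gives $\underline{l_{\min}} \le d_{\min}$.

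The only step that is not a one-line computation is the algebraic verification that the natural restriction of a genuine codeword lies in each coset code $\cC_j^{r,\bc^*_r}$; once that splitting of $H_j\bc^*_j=0$ is in hand, the rest is a direct application of weak LP duality inside each invocation of Algorithm~\ref{alg}. The argument in fact proves the slightly stronger inequality $\underline{l_{\min}}+1\le d_{\min}$ (since $\bc^*_r$ being nonzero forces $\sum_{i\in\cN_r} c^*_i \ge 1$), but only the weaker form is claimed in the proposition.
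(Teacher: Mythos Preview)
Your proof is correct and follows essentially the same approach as the paper's: pick a nonzero codeword, locate a constraint node $r$ on which its restriction is nonzero, observe that the restriction is feasible in Problem-$\text{P}^{r,\bc^*_r}$, and invoke weak duality. You are more explicit than the paper in verifying membership in the coset codes $\cC_j^{r,\bc^*_r}$, and your closing remark that in fact $\underline{l_{\min}} + w_{\min}(\cC_r) \le d_{\min}$ (with $w_{\min}(\cC_r)\ge 1$, and $\ge 2$ for plain LDPC) is a genuine sharpening that the paper does not state.
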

\begin{proof}
First, we claim that $\underline{l^r_{\text{min}}}$, defined in
\eqref{eq: l^rmin underline}, is a lower bound on the minimum
distance of the code, subject to the restriction that the local
codeword on node $r$ is nonzero. This is due to the fact that a
minimum-distance codeword $\tilde{\bc}^r$ subject to this
restriction must have some nonzero local codeword $\bc_r$ on check
node $r$, and while running Algorithm~\ref{alg} with the values of
$\bc_r$ forced on variable nodes $\cN_r$, we optimize over a set
containing $\tilde{\bc}^r$. Furthermore, due to weak duality, the
result $\underline{l^r_{\bc_r}}=\D(\bu)$ is a lower bound on the
optimal value. Consequently, $\underline{l^r_{\text{min}}}$ is a
lower bound on the minimum distance subject to the above
restriction. Now, since a (global) minimum-distance codeword
$\tilde{\bc}$ must contain a check node $r'$, the neighbors of which
represent a nonzero local codeword $\bc_{r'}$, then at some point
Algorithm~\ref{alg lower bound} will have passed through node $r'$
and the local codeword $\bc_{r'}$ and produced a lower bound on the
weight of $\tilde{\bc}$. Since we take the minimum value in
\eqref{eq: l^rmin underline} and \eqref{eq: lmin},
$\underline{l_{\min}}$ is a lower bound on the minimum distance of
the code.
\end{proof}
Recall that we assume parity check degrees which do not depend on
$N$. Consequently, for any check node $r$, the bounds
$\underline{l^r_{\text{min}}}$ and $\overline{l^r_{\text{min}}}$ are
obtained by running Algorithm~\ref{alg} a constant number of times.
Repeating this process over all check nodes thus entails a total
complexity of $O(N^2)$.

Next, we claim that $\overline{l_{\min}}$ is at least as large as
the fractional distance.
\begin{proposition}\label{prop: min-dist-bnd is upper bound on dfrac} The bound $\overline{l_{\min}}$ output by
Algorithm~\ref{alg lower bound} is an upper bound on the fractional
distance of the code.
\end{proposition}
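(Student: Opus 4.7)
The plan is to show that for every check $r \in \cJ$ and every nonzero local codeword $\bc_r \in \cC_r \setminus \{\bzero_r\}$, the number $\overline{l^r_{\bc_r}}$ is at least the $\ell_1$ weight of some nonzero vertex of the fundamental polytope $\cQ(\cC)$. Since the fractional distance $d_{\text{frac}}$ is by definition the smallest $\ell_1$ weight of a nonzero vertex of $\cQ(\cC)$, this will force $\overline{l^r_{\bc_r}} \ge d_{\text{frac}}$ for every $r$ and $\bc_r$, and minimizing over these choices will then give $\overline{l_{\min}} \ge d_{\text{frac}}$, which is precisely the upper-bound claim.

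To implement this, I would first recognize that, with $\gamma_i = 1$ for all $i$, Problem-$\text{P}^{r,\bc_r}$ is equivalent to the LP
\[
l^{\star} \defined \min_{\bc} \sum_{i \in \cI} c_i \quad \text{subject to} \quad \bc \in \cQ(\cC),\ c_i = (\bc_r)_i\ \forall i \in \cN_r.
\]
The primal-feasible vector $\btlambda$ output by Algorithm~\ref{alg}, extended by $(\bc_r)_i$ on $\cN_r$, lies in the feasible set of this LP, so $\overline{l^r_{\bc_r}} = \P(\btlambda) \ge l^{\star}$ (the constant $\|\bc_r\|_1$ coming from the forced coordinates being absorbed into both sides of the identification).

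The key structural step is to argue that the feasible set of this LP is a \emph{face} $F$ of $\cQ(\cC)$. Since $\cQ(\cC) \subseteq [0,1]^{|\cI|}$ and each $(\bc_r)_i \in \{0,1\}$, each equality $c_i = (\bc_r)_i$ is the tight case of a valid inequality $c_i \le 0$ or $c_i \ge 1$ for $\cQ(\cC)$, i.e.\ a supporting hyperplane. Intersecting $\cQ(\cC)$ with these hyperplanes therefore produces a face, and it is a standard polytope fact that every vertex of a face is a vertex of the ambient polytope. Consequently, the linear objective $\sum_i c_i$ attains $l^{\star}$ at some vertex $\bc^\star$ of $F$, which is simultaneously a vertex of $\cQ(\cC)$. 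Because $\bc_r \ne \bzero_r$ we have $\bc^\star \ne \bzero$, so $\bc^\star$ is a nonzero vertex of $\cQ(\cC)$ and $l^{\star} = \|\bc^\star\|_1 \ge d_{\text{frac}}$.

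Chaining the inequalities gives $\overline{l^r_{\bc_r}} \ge l^{\star} \ge d_{\text{frac}}$ for every choice of $r$ and $\bc_r$, and taking the minimum yields the proposition. The main conceptual obstacle is the face step: one must verify that forcing integer $0/1$ values on the coordinates in $\cN_r$ actually cuts out a face of $\cQ(\cC)$ (not merely a face of some looser relaxation), so that the minimizer of the restricted LP is a genuine vertex of the fundamental polytope. Once that is granted, the primal feasibility guaranteed by Algorithm~\ref{alg} together with the trivial non-vanishing of $\bc^\star$ on $\cN_r$ completes the argument immediately.
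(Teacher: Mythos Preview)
Your proposal is correct and follows the same overall strategy as the paper: both reduce the claim to showing that the vertices of the restricted polytope $\cQ_{\bc_r}(\cC)$, obtained by fixing the coordinates in $\cN_r$ to the $0/1$ values prescribed by $\bc_r$, are already vertices of $\cQ(\cC)$; the LP minimum over the restriction is then the $\ell_1$-weight of some nonzero vertex of $\cQ(\cC)$ and hence at least $d_{\text{frac}}$, and $\overline{l^r_{\bc_r}} = \P(\btlambda)$ dominates that minimum by primal feasibility. The difference is only in how the vertex-preservation step is argued. The paper does it concretely: it writes $\cQ(\cC)$ as $A\bc \succeq \bb$, uses the characterization of a vertex as the intersection of $N$ linearly independent tight constraints, and shows that if the appended row enforcing $c_N = 1$ appears among those $N$ rows, it can be swapped for the row encoding $c_N \le 1$ already in $A$ without losing rank. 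You instead invoke the abstract fact that each equality $c_i = 0$ or $c_i = 1$ is the tight case of a valid inequality for $\cQ(\cC) \subseteq [0,1]^N$ (note: you wrote ``$c_i \le 0$ or $c_i \ge 1$'', which should read $c_i \ge 0$ or $c_i \le 1$), so the restriction is a face, and vertices of a face are vertices of the ambient polytope. Your route is cleaner and does not depend on the explicit facet description \eqref{eq:omegaj_def2}, while the paper's is more self-contained; substantively they are the same argument.
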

\begin{proof}
Recall that $\overline{l_{\min}}$ was obtained by running
Algorithm~\ref{alg} over some reduced version of the code,
corresponding to the selection of a nonzero local codeword $\bc_r$
on some check node $r$. That is, we find the minimum of $\sum_{i \in
\cI} c_i$ over the original LP polytope $\cQ(\cC)$ while also
imposing the values $\bc_r$ of the local codeword. For $\bc_r =
\{c_{r,i}\}_{i \in \cN_r}$, let $\cQ_{\bc_r}(\cC)= \{
(p_1,\dots,p_N) \in \cQ(\cC): \forall i \in \cN_r, p_i=c_{r,i} \}$
be the polytope $\cQ(\cC)$ with the added restriction of the local
codeword $\bc_r$. Now, the minimum of any LP is attained at a
vertex. Consequently, if we show that $\cV(\cQ_{\bc_r}(\cC))
\subseteq \cV(\cQ(\cC))$, where $\cV(\cP)$ is the vertex set of the
polytope $\cP$, then the value of $\overline{l_{\min}}$ output by
Algorithm~\ref{alg lower bound} is an upper bound on the $L_1$-norm
of a nonzero pseudocodeword in $\cC$, and is thus an upper bound on
the fractional distance. To this end, it will suffice to show that
imposing an integer value on a single coordinate does not create new
vertices. Without loss of generality, let $\cQ^N(\cC)\triangleq
\cQ(\cC)\bigcap \{c_N =1\}$, and $\bv \in \cV(\cQ^N(\cC))$. We now
show that $\bv \in \cV(\cQ(\cC))$. In essence, an easy algebraic
argument will be used. First, note that $\cQ(\cC)$ can be
represented as a matrix inequality

\bre A \bc \succeq \bb \label{eq: Q A notation} \ere where $\bc$ and
$\bb$ are column vectors, $A$ is a matrix representing, together
with $\bb$, the constraints~\eqref{eq:omegaj_def2}\footnote{The
constraints~\eqref{eq:omegaj_def2}, which are valid for plain LDPC
codes, are clearly expressible in matrix form. In
Section~\ref{sec:GLDPC_polytope} we will see that for GLDPC codes,
the fundamental polytope is also expressible in the form \eqref{eq:
Q A notation}, and thus the conclusion of Proposition~\ref{prop:
min-dist-bnd is upper bound on dfrac} holds also for GLDPC codes.}
for all $j \in \cJ$, and $\succeq$ is the standard coordinate-wise
inequality. The polytope $\cQ^N(\cC)$ can be similarly represented
by

\bre A' \bc \succeq \bb' \nonumber \ere where

\bre A' = \left[\begin{array}{c} A \\ \hline 0 \; 0 \; \dots \; 0\;
1\end{array} \right], \quad \bb' = \left[\begin{array}{c} \bb
\\ \hline 1\end{array} \right] \nonumber \ere We make use of the property that
a vertex is the intersection of $N$ independent hyperplanes.
Specifically, if $\bv \in \cQ^N(\cC)$ then $\bv \in \cV(\cQ^N(\cC))$
if and only if there exists a subset of $N$ rows $A'_1$ of $A'$, and
a corresponding subvector $\bb'_1$ of $\bb'$ such that

\bre A'_1 \bc = \bb'_1 \nonumber \ere and $\det(A'_1) \neq 0$ (see,
e.g., \cite[p.~185]{bertsekas_convex}). Suppose $\bv \in
\cV(\cQ^N(\cC))$. If the matrix $A'_1$ does not contain the last row
of $A'$, then it is also a sub-matrix of $A$ and we conclude that
$\bv \in \cV(\cQ(\cC))$, as required. If $A'_1$ contains the last
row of $A'$, then we replace this row in $A'_1$ with

\bre [0\; 0\; \dots 0\; -1] \label{eq:minus1 row} \ere which appears
in $A$ due to the constraint $c_N \leq 1$ (see
\eqref{eq:omegaj_def2}); with this replacement the determinant is
still nonzero and we conclude that $\bv \in \cQ(\cC)$ is the
intersection of $N$ independent hyperplanes of $\cQ(\cC)$, and thus
$\bv \in \cV(\cQ(\cC))$.
\end{proof}

\emph{Discussion.} In this section it was shown that
$\underline{l_{\text{min}}}$ and $\overline{l_{\text{min}}}$, output
by Algorithm~\ref{alg lower bound} and obtained with complexity
$O(N^2)$, constitute a lower bound on the overall minimum distance
and an upper bound on the fractional distance, respectively. Due to
\eqref{eq: duality gap} and the proof of \cite[Theorem
1]{lpon_journal}, it can be seen that
$\overline{l_{\text{min}}}-\underline{l_{\text{min}}}\le \delta N$
where $\delta>0$ can be made as small as desired. Moreover, it is
possible to improve the lower bound on the minimum distance in the
following two ways. First, the node merging technique presented in
Section~\ref{sec:cuts for_improved_LPD} can be used to produce a
tighter LP relaxation, and thus a better lower bound on the minimum
distance. This approach is exemplified in Section~\ref{sec:
results}. Second, instead of examining single constraint nodes on
which a local codeword is forced, one could divide $\cJ$ into pairs
of constraint nodes, and for each pair examine all nonzero
combinations of their local codewords. For each such nonzero
combination of pairs of codewords, it is possible to find a
minimum-weight vector subject to forcing the variable node neighbors
to values corresponding to these codewords, similar to what is done
in Algorithm~\ref{alg lower bound}. As another possible variant to
Algorithm~\ref{alg lower bound}, we propose the following greedy
procedure, which we term the \emph{bar method}.

We set a target level or ``bar'', denoted $\text{BAR}$, for the
lower bound which the procedure attempts to exceed, as follows. We
loop over all check nodes, as in Algorithm~\ref{alg lower bound}.
For each check node $r\in \cJ$ we evaluate
$\underline{l^r_{\text{min}}}$, defined in \eqref{eq: l^rmin
underline}. If $\underline{l^r_{\text{min}}} \ge \text{BAR}$, we
proceed to the next check node. If $\underline{l^r_{\text{min}}} <
\text{BAR}$, say due to a local codeword $\bc^0_r$ for which
$\underline{l_{\bc^0_r}}<\text{BAR}$, we refine the bound
$\underline{l_{\bc^0_r}}$ as follows. First we pick a check node
$j\in \cN_i$ for some $i \in \cN_r$; thus $j$ is a check node at
distance $2$ from $r$, with respect to $\cG$. Next we loop over the
local codewords $\{\bc_j\}$ on node $j$ which coincide with
$\bc^0_r$ (i.e., which have the same values on the variable nodes in
$\cN_j \cap \cN_r$). For each such local codeword, we run
Algorithm~\ref{alg}, forcing local codewords on \emph{both} check
nodes $r$ and $j$. This forcing of local codewords jointly on a pair
of check nodes produces a lower bound
$\underline{l_{\bc^0_r,\bc_j}}$ on the minimum distance subject to
forcing both check nodes to these local codewords. Define
$\underline{l^{\bc^0_r,j}_{\min}}\defined\min_{\bc_j \in
\cC_j}\{\underline{l_{\bc^0_r,\bc_j}}: \quad \bc_j \; \text{
coincides with } \bc^0_r\}$. If $\underline{l^{\bc^0_r,j}_{\min}}\ge
\text{BAR}$, we conclude that the bar has been exceeded for check
node $r$, and proceed to the next check node. Otherwise, we pick
another check node $j\in \cN_i$ for some $i \in \cN_r$ and repeat
the process until for some choice of node $j$ at distance $2$ from
$r$ the bar is exceeded. If the bar is not exceeded for all nodes
$j$ at distance $2$ from $r$, the procedure is terminated with
failure. If, on the other hand, for all $r\in \cJ$ we have
$\underline{l^r_{\text{min}}} \ge \text{BAR}$ or otherwise if for
all $\bc^0_r\in \cC_r$ we have $\underline{l^{\bc^0_r,j}_{\min}}\ge
\text{BAR}$ for some node $j$ at distance $2$ from $r$, then we say
that the level $\text{BAR}$ is \emph{attainable}. Due to the bound
refining process, an attainable bar level is guaranteed to be a
lower bound on the minimum distance. Using initial lower and upper
values for the bar (which can be determined ad-hoc), we can run the
procedure, each time using a different level for the bar, and use
the bisection method to find the highest attainable bar level.
\section{A Lower Bound on the Fractional Distance in Quadratic
Complexity} \label{sec: lower bound on dfrac} In \cite{lpdecode},
the following algorithm for calculating the fractional distance
$d_{\text{frac}}$ was proposed. Consider the codeword polytope
$\cQ(\cC)$ and the all-zero vertex $\bzero$. It was shown in
\cite{lpdecode} that the facets of this polytope (for plain LDPC
codes) are given by \eqref{eq:omegaj_def2}, $\forall j \in \cJ$.
Denote the set of facets of $\cQ(\cC)$ which do not contain $\bzero$
by $\cF$. For each facet $f \in \cF$, run an LP solver to find the
minimum $L_1$ norm $\sum_{i\in \cI} c_i$ over $f$. The smallest
value (over all facets) obtained in this procedure is the fractional
distance \cite{lpdecode}. The complexity of this calculation is the
same as the complexity of running an LP solver $|\cF|$ times, and in
our case we have $|\cF| = O(N)$. By using the iterative
linear-complexity LP decoder, we will demonstrate a procedure which
produces a lower bound on the fractional distance with complexity
$O(N^2)$. In terms of computational complexity, this compares
favorably with the aforementioned procedure because an LP solver in
general has complexity worse than $O(N)$. Furthermore, if a higher
complexity is allowed, this lower bound can be made arbitrarily
close to the true fractional distance. In this section we assume
plain LDPC codes. The results in the next section enable the
generalization of the algorithm to GLDPC codes.

Recall that $\bc \in \cQ(\cC)$ if and only if $0 \leq c_i \leq 1$
$\forall i \in \cI$ and $\forall j \in \cJ, S \subseteq \cN_j, |S|
\text{ odd}$ we have $$\sum_{i\in\cN_j\setminus S} c_i + \sum_{i\in
S} \left( 1-c_i \right) \geq 1$$ Of these inequality constraints,
the facets of $\cQ(\cC)$ which do not contain $\bzero$ are \bre
\{c_i = 1\}_{i \in \cI} \label{ci=1 facets} \ere and
\begin{multline} \label{diagonal facets} \sum_{i\in\cN_j\setminus S} c_i + \sum_{i\in S}
\left( 1-c_i \right) = 1
\\ \forall j \in \cJ, S \subseteq \cN_j, |S|
\text{ odd}, |S|>1 \end{multline} We follow the approach in
\cite{lpdecode}, i.e., for each of the facets in \eqref{ci=1
facets}-\eqref{diagonal facets} we evaluate the minimum $L_1$ norm
over the facet. We show how each such evaluation can be performed in
linear time using Algorithm~\ref{alg}.

Let $i \in \cI$ be some index and consider its corresponding facet
from \eqref{ci=1 facets}. To find the minimum weight over this
facet, we can implement the method from Section~\ref{sec:min
distance bound}, as follows. The Tanner graph is modified by
removing variable node $i$ and all the edges incident to it. Now, to
keep the graph consistent with the original code, each constraint
node $j \in \cN_i$ (be it a standard parity check or generalized
node) is modified so that it represents a local coset code $H_{j
\setminus i} \bc_{j \setminus i}= \bh_{i,j}$ (where the subscript
$j\setminus i$ denotes that we remove the $i$'th column of $H_j$ and
the $i$'th bit of $\bc_j$) for some column vector $\bh_{i,j}\neq 0$
rather than $H_j \bc_j= 0$. The column $\bh_{i,j}$ is the column
from the parity-check matrix representing constraint node $j$, which
corresponds to the index of variable node $i \in \cN_j$ (note that
if $j$ is a standard parity check then $\bh_{i,j}=1$ has dimesion
$1$). Algorithm~\ref{alg} runs the same on this modified graph,
except that, similarly to Section~\ref{sec:min distance bound}, we
need to replace the calculation of $A_{k,j}$ and $B_{k,j}$ in
\eqref{eq: calcAk} and \eqref{eq: calcBk} by \eqref{eq: calcAk
coset} and \eqref{eq: calcBk coset}, respectively. The vector
$\bh_{i,j}$ replaces $\tilde{\bh}^{r,\bc_r}$ in \eqref{eq: calcAk
coset} and \eqref{eq: calcBk coset}. If we take the value of the
dual in the execution of Algorithm~\ref{alg} then due to \eqref{eq:
sandwich} we get for each $i\in \cI$, a vector with weight
$\underline{d_{\text{frac},i}^{(1)}}$ which is a lower bound on the
minimum weight over the facet $\{c_i=1\}$. The minimum value, \bre
\label{eq: bound on dfrac,square facets} d_{\text{frac}}^{(1)}
\defined \min_{i \in \cI} \underline{d_{\text{frac},i}^{(1)}}\ere is thus a lower bound on the minimum fractional weight over all facets
\eqref{ci=1 facets} of $\cQ(\cC)$. This lower bound can be made as
tight as desired if in Algorithm~\ref{alg} we take $K$ large enough
and $\epsilon_0$ small enough.

We now turn to the problem of calculating minimum fractional
distance on the facets \eqref{diagonal facets}. Let $r$ be some
fixed constraint node and let $S\subseteq \cN_r$ be an odd-sized
set, $|S|>1$. Define the hyperplane $R^{r,S}$ as follows:
$$
R^{r,S}= \left\{\bc:\sum_{i\in\cN_r\setminus S} c_i + \sum_{i\in S}
\left( 1-c_i \right) = 1 \right\}
$$
Our problem is to find the minimum fractional distance on
$R^{r,S}\cap \cQ(\cC)$, denoted $d_{\text{frac}}^{r,S}$. Consider
the following problem, {\em Problem-}$P_{\text{frac}}^{r,S,B}$: \bre
\min_{\bc,\bomega} \sum_{i\in \cI} c_i + B
\left(\sum_{i\in\cN_r\setminus S} c_i + \sum_{i\in S} \left( 1-c_i
\right) - 1 \right) \label{eq: problem Prs}\ere subject to
~\eqref{eq:positive_w}-\eqref{eq:ci}, where $B>0$ is some large
constant. This definition is motivated by the following
observations. First, the feasible region of
Problem-$\text{P}_{\text{frac}}^{r,S,B}$ (in the $\bc$ variables) is
by definition the polytope $\cQ(\cC)$. Thus, every feasible point
must satisfy \bre \sum_{i\in\cN_r\setminus S} c_i + \sum_{i\in S}
\left( 1-c_i \right) \geq 1 \label{eq: half hyperplane}\ere The
second term in \eqref{eq: problem Prs} is, by \eqref{eq: half
hyperplane}, a positive penalty term on the event of sharp
inequality in \eqref{eq: half hyperplane}. Therefore, in the limit
where $B \rightarrow \infty$, the exact value
$d_{\text{frac}}^{r,S}$ is produced as the solution to
Problem-$\text{P}_{\text{frac}}^{r,S,B}$. If the constant $B$ is
finite, the weight $d_{\text{frac},r,S,B}^{(2)}$ of the solution to
Problem-$\text{P}_{\text{frac}}^{r,S,B}$ can be seen to be a lower
bound on $d_{\text{frac}}^{r,S}$ (since the vector $\bc$ which
attains the minimum fractional distance $d_{\text{frac}}^{r,S}$ is
feasible in Problem-$\text{P}_{\text{frac}}^{r,S,B}$ and yields the
objective function value $d_{\text{frac}}^{r,S}$). Second, the
objective function of Problem-$\text{P}_{\text{frac}}^{r,S,B}$ is
linear in its variables (it also contains the additive constant
$B(|S|-1)$ which is independent of the variables and thus can be
ignored). Consequently, Problem-$\text{P}_{\text{frac}}^{r,S,B}$ can
be reduced to an instance of Problem-P, by setting $\{\gamma_i\}_{i
\in \cI}$ as \bre \tilde{\gamma}_i = \left\{\begin{array}{l l} 1 & i
\notin \cN_r
\\ 1-B & i \in S \\ 1+B & i \in \cN_r / S
\end{array} \right. \label{eq: modified gammas}\ere
We conclude that the solution to
Problem-$\text{P}_{\text{frac}}^{r,S,B}$ can be approximated
arbitrarily closely by Algorithm~\ref{alg}. Now, we use
Algorithm~\ref{alg} to solve
Problem-$\text{P}_{\text{frac}}^{r,S,B}$ $\forall j \in \cJ, S
\subseteq \cN_j, |S| \text{ odd}, |S|>1$. By taking the dual value
output by Algorithm~\ref{alg}, again by \eqref{eq: sandwich}, we
obtain a set of lower bounds
$\{\underline{d_{\text{frac},r,S,B}^{(2)}}\}$ on the minimum
fractional weight for each of the facets \eqref{diagonal facets} of
$\cQ(\cC)$. A lower bound on the minimum fractional distance over
the facets \eqref{diagonal facets} is given by
$d_{\text{frac},B}^{(2)}
\defined \min_{r \in \cJ, S \subseteq \cN_r, |S| \text{ odd}, |S|>1}
\underline{d_{\text{frac},r,S,B}^{(2)}}$ Finally, we combine this
result with \eqref{eq: bound on dfrac,square facets} and obtain the
lower bound
$$
d_{\text{frac}} \geq d_{\text{frac},B} \defined
\min(d_{\text{frac}}^{(1)},d_{\text{frac},B}^{(2)})
$$

Assuming the node degrees in the Tanner graph are bounded by a
constant independent of $N$, the overall number of calls to
Algorithm~\ref{alg} is $O(N)$. Thus the computational complexity of
evaluating a lower bound on the minimum fractional distance using
the procedure described above is $O(N^2)$.

Naturally, the exact fractional distance can be approached by using
very large values for the penalty constant $B$. It should be noted
that this could theoretically have an effect on the bound on the
convergence rate of Algorithm~\ref{alg}: In \cite[Theorem
1]{lpon_journal} it was shown that the bound on the convergence rate
of Algorithm~\ref{alg} is related to $\gamma_{\max}$, and from
\eqref{eq: modified gammas} it can be seen that this quantity may be
large as we increase $B$. In Section~\ref{sec: results}, we describe
several experiments conducted with $B=10N$. In these experiments, we
have not observed a significant increase in running time as compared
with the minimum distance bounds from Section~\ref{sec:min distance
bound}.

\section{Fundamental polytopes of GLDPC and nonbinary codes} \label{sec:GLDPC_polytope}
For plain LDPC codes, the fundamental polytope is represented by
\eqref{eq:omegaj_def2}. In this section, we propose a practical
procedure which obtains representations of the fundamental polytopes
of two important classes of codes: GLDPC and nonbinary codes. Using
these representations, one can apply a procedure similar to the one
presented in Section~\ref{sec: lower bound on dfrac} to calculate a
tight lower bound on the fractional distance which can be used to
assess the performance of the LP decoder in these cases. An
interesting effort for nonbinary codes has recently been made by
Skachek~\cite{Skachek2010lpd}, for trenary codes. We propose a
general practical technique which relies on the double description
method \cite{FukudaProdon1995,Motzkin} to find, similar to
\eqref{eq:omegaj_def2}, a description of the fundamental polytope.
For the case of nonbinary codes, as in \cite{Skachek2010lpd}, we
consider the LP formulation from \cite{flanagan2008lpd} which
expresses a nonbinary decoding problem using a binary problem of
higher dimension.

Consider a constraint node $j$. Suppose it represents some general,
not necessarily linear, local code $\cC_j$ with $M$ codewords and
block length $d$ (in what follows, we will omit the subscript $j$).
Similar to~\eqref{eq:tpsi_j_def} we construct a $d \times M$ matrix
$\Psi$ for this code by writing all the codewords on the columns of
that matrix, i.e.
$$
\Psi = \left(
  \begin{array}{cccc}
    \bc^{0} & \bc^{1} & \ldots & \bc^{M-1} \\
  \end{array}
\right)
$$
where $\bc^{0}, \bc^{1}, \ldots, \bc^{M-1}$ are the codewords. We
also define
$$
\tPsi = \left(
        \begin{array}{c}
          \bone_M \\
          \Psi \\
        \end{array}
      \right)
$$
where $\bone_M$ is a row vector of $M$ ones. The code polytope,
$\cQ(\cC)$, is defined by (see Section~\ref{sec:cuts
for_improved_LPD} for an equivalent definition)
$$
\cQ(\cC) \defined \left\{ \bc \: : \: \exists \bomega \succeq 0
\mbox{ such that } \tPsi \bomega = \onec \right\}
$$
Following~\cite[Theorems 4.9 and 4.10, pp.~97--98]{wolsey}, we now
show that $\cQ(\cC)$ is indeed a polytope, and present it in a more
explicit form. Consider the following two systems of linear
inequalities \bre \label{eq:alternative1} \bomega \succeq 0 \: , \:
\tPsi \bomega = \onec \ere and \bre \label{eq:alternative2} \tPsi^T
\bv \preceq 0 \: , \: (1 \: \: \: \bc^T) \bv > 0 \ere By Farkas'
lemma (e.g.~\cite[page 263]{boyd}) the
systems~\eqref{eq:alternative1} and~\eqref{eq:alternative2} are
strong alternatives. That is, the system~\eqref{eq:alternative1} is
feasible if and only if the system~\eqref{eq:alternative2} is
infeasible. We conclude that, \bre \label{eq:cpcc1} \cQ(\cC) =
\left\{ \bc \: : \: (1 \: \: \: \bc^T) \bv \le 0 \quad \forall \bv
\in \cT \right\} \ere where $\cT$ is the following polyhedral cone
\bre \label{eq:ct1} \cT = \left\{ \bv \: : \: \tPsi^T \bv \preceq 0
\right\} \ere Now, $\cT$ can be expressed in the following
alternative form, \bre \label{eq:ct2} \cT = \left\{ \bv \: : \: \bv
= \sum_{l=1}^r \mu_l \bv_l \: , \: \left\{ \mu_l \ge 0
\right\}_{l=1}^r \right\} \ere where $\bv_l$, $l=1,\ldots,r$ are the
extreme rays of $\cT$. By~\cite[Proposition 4.3, p.~94]{wolsey}
$\bv_l$ is an extreme ray of a polyhedral cone $\cT$ if and only if
$\{ \eta \bv_l \: : \: \eta \ge 0 \}$ is a one dimensional face of
$\cT$. For the polyhedron $\cT \subseteq \mathbb{R}^{d+1}$, the
vector $\bv\in\cT$, $\bv \ne \bzr$ is an extreme ray if and only if
it satisfies $d$ linearly independent constraints among $\tPsi^T \bv
\preceq 0$ with equality. By~\eqref{eq:cpcc1} and~\eqref{eq:ct2} we
have \bre \label{eq:cpcc2} \cQ(\cC) = \left\{ \bc \: : \: (1 \: \:
\: \bc^T) \bv_j \le 0 \quad \forall j=1,\ldots,r \right\} \ere Thus,
if we can obtain the extreme rays of the cone $\cT$, we have the
desired representation of the code polytope using \eqref{eq:cpcc2}.

One possible way of doing this is to apply the double description
method \cite{FukudaProdon1995, Motzkin}. A pair of matrices $(A,R)$
is said to be a double description (DD) pair if \bre A \bx \succeq 0
\quad \text{iff} \quad \bx = R \bmu, \; \bmu \succeq 0 \ere Clearly,
\eqref{eq:ct1} and \eqref{eq:ct2} lead to a DD pair of the cone
$\cT$, by identifying $A=-\tPsi^T$, $\bx=\bv$, $R=(\; \bv_1 \; \bv_2
 \dots \bv_r \;)$ and $\bmu = (\mu_1 \: \mu_2 \: \dots \:
\mu_r)^T$. The problem is, given one of the matrices $R$ or $A$, to
find the other. In our case, we need to find $R$ given $A$. The
double description method is an algorithm which solves this problem.
Unfortunately, the complexity of the algorithm in general grows
exponentially with the size of the problem, so it will only work
reasonably if we keep the code small. To apply the DD method, we
have used the publicly available software implementation cdd+
\cite{Fukuda2008}.

Using cdd+, we were able to obtain a representation of a binary
constraint node representing the $(7,4)$ Hamming code $\cH_7$. The
matrix $R_{\cH_7}$ has $8$ rows and $70$ columns, and its transpose
is given in Figure~\ref{fig: R^T of hamming}. For example, the first
line of $R^T_{\cH_7}$ represents the constraint
\begin{equation*}
-2 -c_1 +c_3+c_5+c_7 \le 0
\end{equation*}
which is a facet of $\cQ(\cH_7)$. Note that the constraints $0\le
c_i \le 1 \; \forall i\in \{1,2,\dots,7\}$ are also implicit in
$R_{\cH_7}$. Using this representation together with the lower bound
on the fractional distance in Section~\ref{sec: lower bound on
dfrac}, we found a randomly-generated GLDPC code with $70$ variable
nodes and $20$ constraint nodes, each representing a local $\cH_7$
code, with fractional distance at least $11.2418$. This code has
rate $1/7$. The fractional distance result guarantees that this code
can correct $5$ errors using the LP decoder.
\begin{figure}
\begin{equation*}  R^T_{\cH_7} = \left(
{\footnotesize
\begin{tabular}{c c c c c c c c}
 -2 & -1 & 0 & 1 & 0 & 1 & 0 & 1 \\
 -2 & 0 & -1 & 1 & 0 & 0 & 1 & 1 \\
 -2 & 0 & -1 & 1 & 1 & 1 & 0 & 0 \\
 -2 & -1 & 0 & 1 & 1 & 0 & 1 & 0 \\
 -2 & -1 & 1 & 0 & 0 & 1 & 1 & 0 \\
 -1 & 0 & 0 & 0 & 0 & 1 & 0 & 0 \\
 -2 & 0 & 0 & 0 & -1 & 1 & 1 & 1 \\
 -1 & 0 & 0 & 0 & 0 & 0 & 1 & 0 \\
 -2 & 0 & 0 & 0 & 1 & 1 & 1 & -1 \\
 -2 & 1 & -1 & 0 & 0 & 1 & 1 & 0 \\
 -2 & 0 & 1 & -1 & 0 & 0 & 1 & 1 \\
 -2 & 0 & 1 & -1 & 1 & 1 & 0 & 0 \\
 -2 & 1 & 0 & -1 & 0 & 1 & 0 & 1 \\
 -2 & 1 & 0 & -1 & 1 & 0 & 1 & 0 \\
 -2 & 1 & -1 & 0 & 1 & 0 & 0 & 1 \\
 -1 & 0 & 0 & 0 & 1 & 0 & 0 & 0 \\
 -2 & 0 & 0 & 0 & 1 & -1 & 1 & 1 \\
 -1 & 0 & 0 & 0 & 0 & 0 & 0 & 1 \\
 -2 & 0 & 0 & 0 & 1 & 1 & -1 & 1 \\
 -2 & -1 & 1 & 0 & 1 & 0 & 0 & 1 \\
 -2 & 0 & 1 & 1 & 0 & 0 & -1 & 1 \\
 -2 & 1 & 0 & 1 & 0 & -1 & 0 & 1 \\
 -2 & 1 & 0 & 1 & 1 & 0 & -1 & 0 \\
 -2 & 0 & 1 & 1 & 1 & -1 & 0 & 0 \\
 -2 & 1 & 1 & 0 & 0 & -1 & 1 & 0 \\
 -2 & 1 & 1 & 0 & 1 & 0 & 0 & -1 \\
 -1 & 1 & 0 & 0 & 0 & 0 & 0 & 0 \\
 -2 & 1 & 1 & 0 & -1 & 0 & 0 & 1 \\
 -2 & 1 & 1 & 0 & 0 & 1 & -1 & 0 \\
 -1 & 0 & 1 & 0 & 0 & 0 & 0 & 0 \\
 -2 & 0 & 1 & 1 & -1 & 1 & 0 & 0 \\
 -2 & 1 & 0 & 1 & -1 & 0 & 1 & 0 \\
 -2 & 1 & 0 & 1 & 0 & 1 & 0 & -1 \\
 -2 & 0 & 1 & 1 & 0 & 0 & 1 & -1 \\
 -1 & 0 & 0 & 1 & 0 & 0 & 0 & 0 \\
 0 & -1 & 0 & 1 & -1 & 0 & -1 & 0 \\
 0 & 0 & -1 & 1 & -1 & -1 & 0 & 0 \\
 0 & 0 & -1 & 1 & 0 & 0 & -1 & -1 \\
 0 & -1 & 0 & 1 & 0 & -1 & 0 & -1 \\
 0 & -1 & 1 & 0 & -1 & 0 & 0 & -1 \\
 0 & 0 & 0 & 0 & -1 & 1 & -1 & -1 \\
 0 & 0 & 0 & 0 & -1 & 0 & 0 & 0 \\
 0 & 0 & 0 & 0 & -1 & -1 & 1 & -1 \\
 0 & 0 & 0 & 0 & 0 & 0 & 0 & -1 \\
 0 & 1 & -1 & 0 & -1 & 0 & 0 & -1 \\
 0 & 0 & 1 & -1 & -1 & -1 & 0 & 0 \\
 0 & 0 & 1 & -1 & 0 & 0 & -1 & -1 \\
 0 & 1 & 0 & -1 & -1 & 0 & -1 & 0 \\
 0 & 1 & 0 & -1 & 0 & -1 & 0 & -1 \\
 0 & 1 & -1 & 0 & 0 & -1 & -1 & 0 \\
 0 & 0 & 0 & 0 & 1 & -1 & -1 & -1 \\
 0 & 0 & 0 & 0 & 0 & -1 & 0 & 0 \\
 0 & 0 & 0 & 0 & -1 & -1 & -1 & 1 \\
 0 & 0 & 0 & 0 & 0 & 0 & -1 & 0 \\
 0 & -1 & 1 & 0 & 0 & -1 & -1 & 0 \\
 0 & 0 & 0 & -1 & 0 & 0 & 0 & 0 \\
 0 & 0 & -1 & -1 & 1 & -1 & 0 & 0 \\
 0 & 0 & -1 & -1 & 0 & 0 & -1 & 1 \\
 0 & -1 & 0 & -1 & 1 & 0 & -1 & 0 \\
 0 & -1 & 0 & -1 & 0 & -1 & 0 & 1 \\
 0 & 0 & -1 & -1 & 0 & 0 & 1 & -1 \\
 0 & 0 & -1 & -1 & -1 & 1 & 0 & 0 \\
 0 & -1 & 0 & -1 & 0 & 1 & 0 & -1 \\
 0 & -1 & 0 & -1 & -1 & 0 & 1 & 0 \\
 0 & 0 & -1 & 0 & 0 & 0 & 0 & 0 \\
 0 & -1 & -1 & 0 & 1 & 0 & 0 & -1 \\
 0 & -1 & -1 & 0 & 0 & -1 & 1 & 0 \\
 0 & -1 & -1 & 0 & -1 & 0 & 0 & 1 \\
 0 & -1 & -1 & 0 & 0 & 1 & -1 & 0 \\
 0 & -1 & 0 & 0 & 0 & 0 & 0 & 0
\end{tabular} }\right)
\end{equation*} \caption{The extreme rays of the cone
$\cT$ of the $(7,4)$ Hamming code.} \label{fig: R^T of hamming}
\end{figure}

\begin{figure*}[!t]
\begin{equation*}  R^T_{\text{SPC}_{GF(4)}(4)} = \left(
{\small
\begin{tabular}{c c c c c c c c c c c c c}
-1 & 0 & 0 & 0 & 0 & 0 & 0 & 0 & 0 & 0 & 1 & 1 & 1 \\
 -1 & 0 & 0 & 0 & 0 & 0 & 0 & 1 & 1 & 1 & 0 & 0 & 0 \\
 -2 & 0 & 1 & 1 & 0 & -1 & -1 & 0 & 1 & 1 & 0 & 1 & 1 \\
 -2 & 0 & -1 & -1 & 0 & 1 & 1 & 0 & 1 & 1 & 0 & 1 & 1 \\
 -2 & 1 & 0 & 1 & -1 & 0 & -1 & 1 & 0 & 1 & 1 & 0 & 1 \\
 -2 & -1 & 0 & -1 & 1 & 0 & 1 & 1 & 0 & 1 & 1 & 0 & 1 \\
 -2 & 0 & 1 & 1 & 0 & 1 & 1 & 0 & 1 & 1 & 0 & -1 & -1 \\
 -2 & 1 & 0 & 1 & 1 & 0 & 1 & 1 & 0 & 1 & -1 & 0 & -1 \\
 -1 & 1 & 1 & 1 & 0 & 0 & 0 & 0 & 0 & 0 & 0 & 0 & 0 \\
 -2 & 1 & 1 & 0 & 1 & 1 & 0 & 1 & 1 & 0 & -1 & -1 & 0 \\
 -2 & 1 & 1 & 0 & 1 & 1 & 0 & -1 & -1 & 0 & 1 & 1 & 0 \\
 -2 & 1 & 0 & 1 & 1 & 0 & 1 & -1 & 0 & -1 & 1 & 0 & 1 \\
 -2 & 0 & 1 & 1 & 0 & 1 & 1 & 0 & -1 & -1 & 0 & 1 & 1 \\
 -1 & 0 & 0 & 0 & 1 & 1 & 1 & 0 & 0 & 0 & 0 & 0 & 0 \\
 -2 & 1 & 1 & 0 & -1 & -1 & 0 & 1 & 1 & 0 & 1 & 1 & 0 \\
 -2 & -1 & -1 & 0 & 1 & 1 & 0 & 1 & 1 & 0 & 1 & 1 & 0 \\
 0 & 0 & 0 & 0 & 0 & 0 & 0 & 0 & 0 & -1 & 0 & 0 & 0 \\
 0 & 0 & 0 & 0 & 0 & 0 & 0 & 0 & 0 & 0 & 0 & 0 & -1 \\
 0 & 1 & 0 & 1 & -1 & 0 & -1 & -1 & 0 & -1 & -1 & 0 & -1 \\
 0 & -1 & 0 & -1 & 1 & 0 & 1 & -1 & 0 & -1 & -1 & 0 & -1 \\
 0 & 0 & 1 & 1 & 0 & -1 & -1 & 0 & -1 & -1 & 0 & -1 & -1 \\
 0 & 0 & -1 & -1 & 0 & 1 & 1 & 0 & -1 & -1 & 0 & -1 & -1 \\
 0 & 0 & 0 & 0 & 0 & 0 & 0 & 0 & 0 & 0 & 0 & -1 & 0 \\
 0 & 0 & 0 & 0 & 0 & 0 & 0 & 0 & -1 & 0 & 0 & 0 & 0 \\
 0 & 1 & 1 & 0 & -1 & -1 & 0 & -1 & -1 & 0 & -1 & -1 & 0 \\
 0 & -1 & -1 & 0 & 1 & 1 & 0 & -1 & -1 & 0 & -1 & -1 & 0 \\
 0 & 0 & 0 & 0 & 0 & 0 & 0 & -1 & 0 & 0 & 0 & 0 & 0 \\
 0 & 0 & 0 & 0 & 0 & 0 & 0 & 0 & 0 & 0 & -1 & 0 & 0 \\
 0 & 0 & -1 & -1 & 0 & -1 & -1 & 0 & 1 & 1 & 0 & -1 & -1 \\
 0 & -1 & 0 & -1 & -1 & 0 & -1 & 1 & 0 & 1 & -1 & 0 & -1 \\
 0 & 0 & 0 & 0 & 0 & 0 & -1 & 0 & 0 & 0 & 0 & 0 & 0 \\
 0 & -1 & 0 & -1 & -1 & 0 & -1 & -1 & 0 & -1 & 1 & 0 & 1 \\
 0 & 0 & -1 & -1 & 0 & -1 & -1 & 0 & -1 & -1 & 0 & 1 & 1 \\
 0 & 0 & 0 & 0 & 0 & -1 & 0 & 0 & 0 & 0 & 0 & 0 & 0 \\
 0 & -1 & -1 & 0 & -1 & -1 & 0 & 1 & 1 & 0 & -1 & -1 & 0 \\
 0 & -1 & -1 & 0 & -1 & -1 & 0 & -1 & -1 & 0 & 1 & 1 & 0 \\
 0 & 0 & 0 & 0 & -1 & 0 & 0 & 0 & 0 & 0 & 0 & 0 & 0 \\
 0 & 0 & 0 & -1 & 0 & 0 & 0 & 0 & 0 & 0 & 0 & 0 & 0 \\
 0 & 0 & -1 & 0 & 0 & 0 & 0 & 0 & 0 & 0 & 0 & 0 & 0 \\
 0 & -1 & 0 & 0 & 0 & 0 & 0 & 0 & 0 & 0 & 0 & 0 & 0
\end{tabular} }\right)
\end{equation*} \caption{The extreme rays of the cone
$\cT$ of the SPC($4$) code over GF($4$).} \label{fig: R^T of SPC4}
\end{figure*}

Flanagan \etal \cite{flanagan2008lpd} gave an LP formulation
suitable for nonbinary codes over rings. For a ring
$\cR=\{0,a_1,\dots,a_{|\cR|-1}\}$ of size $|\cR|$, every nonbinary
symbol is represented using a binary vector of size $|\cR|-1$. If
the nonbinary symbol is zero in some codeword, the corresponding
binary vector is set to zero. If the nonbinary symbol is nonzero,
say $a_i$, then the $i$'th element of the binary vector is set to
$1$ and the rest of the binary vector is set to zero. In this
manner, a nonbinary local codeword $\bc_j$ is mapped to a binary
local codeword. The convex hull of all the resulting binary
codewords corresponding to constraint node $j$ yields the
fundamental polytope $\cQ(\cC_j)$ and the overall polytope is the
intersection $\cap_{j\in\cJ}\cQ(\cC_j)$, just as in the binary case.
The difference is that in the nonbinary case, the dimension of the
polytope is larger, and because of the method of representation,
$\cC_j$, when viewed as a binary code, is not necessarily linear.

Clearly, using the DD method, it is possible to find
(experimentally) the representation of any local code, provided it
is not too large. As an example, using a binary vector of length
$12$, we have found a representation of the fundamental polytope of
the nonbinary simple parity check code of length $4$ over GF($4$).
This polytope has $40$ facets, given by the rows of the matrix
$R^T_{\text{SPC}_{GF(4)}(4)}$, which is presented in
Figure~\ref{fig: R^T of SPC4}. For any particular LDPC code over
GF($4$) which uses $\text{SPC}_{GF(4)}(4)$ as the local code in its
constraint nodes, we can use the method in Section~\ref{sec: lower
bound on dfrac} to obtain tight lower bounds on the fractional
distance. The same conclusion applies in general when, instead of
$\text{SPC}_{GF(4)}(4)$ we take any other code, provided that the DD
method outputs (in reasonable time complexity) a representation of
its corresponding fundamental polytope. We expect that this will be
the case for practical constituent codes.


\begin{figure*}[!t]
\begin{center}
\includegraphics[scale=0.7]{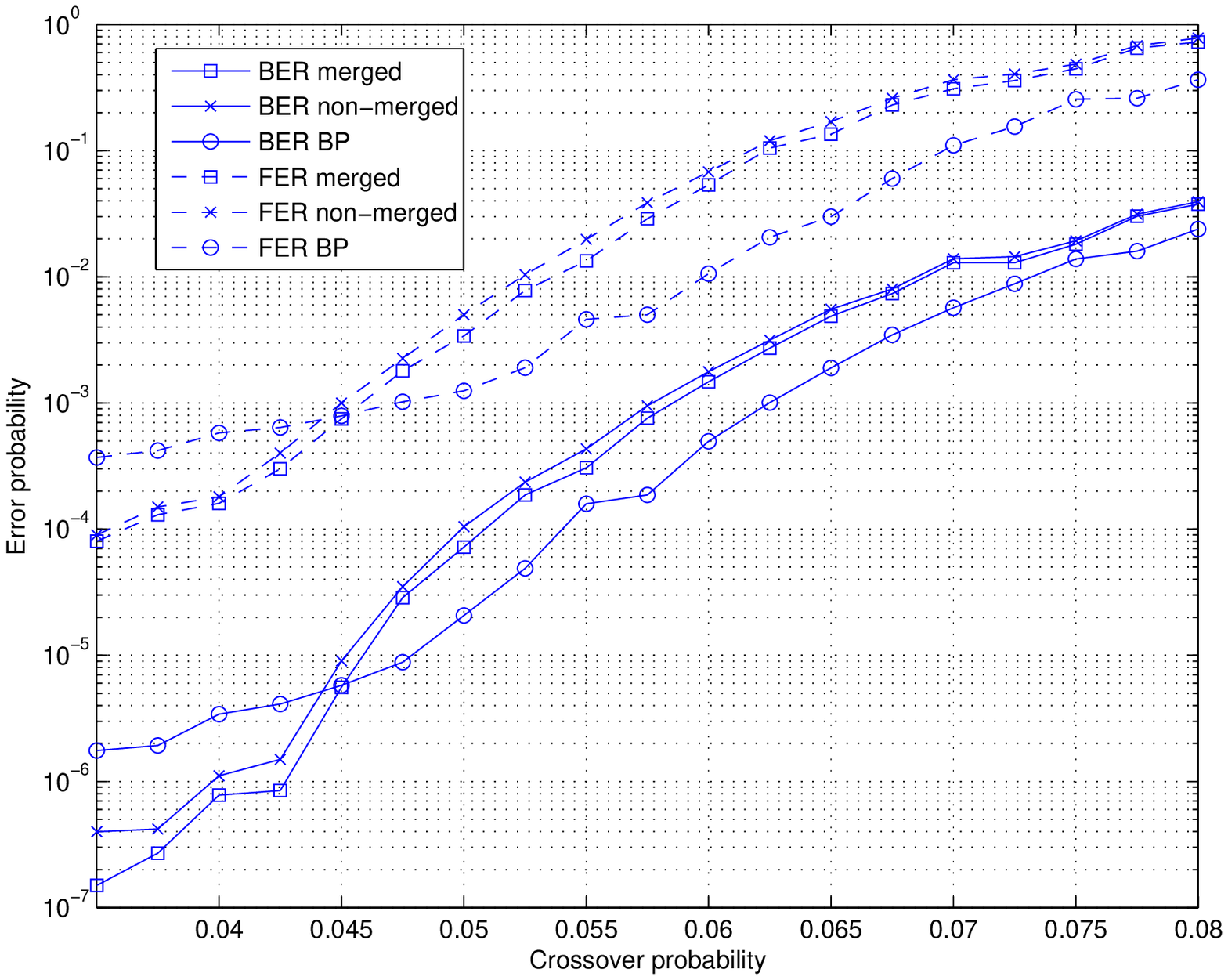}
\caption{Simulation results for the BSC comparing the approximate LP
decoder with and without merging of check nodes with belief
propagation.} \label{fig: simulation results}
\end{center}
\end{figure*}
\begin{figure*}[!t]
\begin{center}
\includegraphics[scale=0.7]{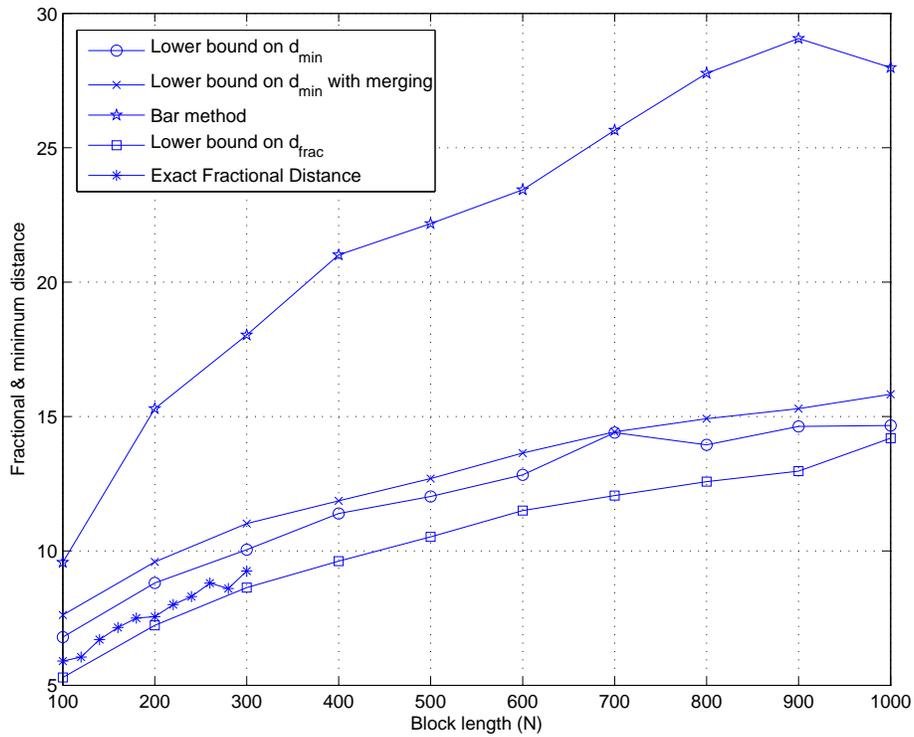}
\caption{Lower bounds on the fractional distance and minimum
distance.} \label{fig: mindistbarfracdist}
\end{center}
\end{figure*}

\section{Numerical Results}\label{sec: results}
In the experiments outlined below, Algorithm~\ref{alg} was operated
in the following mode, with respect to the values of the parameters
$K$ and $\epsilon_0$. Initialize with $K=1000$ and
$\epsilon_0=0.01$. Iterate until $\epsilon<\epsilon_0$. Then
multiply $K$ by $1.26$, divide $\epsilon$ by $1.26$ and iterate with
the new constraints. The process of iterating and multiplying and
dividing by $1.26$ is repeated ten times, so that at the end of the
process $K \approx 10000$ and $\epsilon_0 \approx 0.001$.

Figure~\ref{fig: simulation results} shows simulation results
comparing the approximate iterative LP decoder with and without
merging of check nodes, as suggested in Section~\ref{sec:cuts
for_improved_LPD}. In the simulation we picked codes at random from
Gallager's $(3,6)$-regular ensemble of length $N=1002$. These codes
were transmitted over a binary symmetric channel. For the merging
process, check nodes are picked as follows. First, a list of short
cycles (up to maximum length of $6$) in the graph is generated using
the procedure proposed in~\cite{tiernan1970efficient}. Based on this
list, in each cycle we merge the check nodes into one new check
node. This approach is motivated by
Proposition~\ref{prop:2_equals_1}, which guarantees an unchanged
relaxation if the set of merged constraint nodes is cycle-free. As a
reference, we also plot results for iterative belief propagation
decoding. It is apparent from the figure that despite the
improvement due to check node merging, the LP decoder exhibits worse
performance than the belief propagation decoder for values of the
channel crossover probability greater than $0.045$, and better
performance for lower values of crossover probability.

In Figure~\ref{fig: mindistbarfracdist}, we plot our results from
Sections~\ref{sec:min distance bound} and~\ref{sec: lower bound on
dfrac}. The lower bound on the minimum distance derived in
Section~\ref{sec:min distance bound} is shown in circles. This plot
shows, for various block lengths, the average value of
$\underline{l_{\min}}$ (see \eqref{eq: lmin}) over $10$
randomly-generated codes, taken from Gallager's $(3,4)$-regular
ensemble. We also include improved bounds obtained by merging check
nodes. Check nodes were chosen for merging as in Figure~\ref{fig:
simulation results}, using all cycles of length up to $6$.

Figure~\ref{fig: mindistbarfracdist} also shows an improved lower
bound on the minimum distance using the bar method outlined at the
end of Section~\ref{sec:min distance bound}. Clearly, in this case
the lower bound is significantly improved by allowing pairs of nodes
to be examined. Next, we plot the lower bound on the fractional
distance from Section~\ref{sec: lower bound on dfrac}. The lower
bound on the fractional distance is calculated with the penalty
constant set to $B=10N$. In this case we set $\epsilon_0$ in
Algorithm~\ref{alg} to an initial value of $0.1/B$, instead of
$0.01$ as previously stated; this more stringent setting was used in
order to increase the accuracy of our result in light of the large
value of $\gamma_{\max}$\footnote{This is the setting which could
potentially have the effect on the convergence rate of
Algorithm~\ref{alg} which was discussed in the end of
Section~\ref{sec: lower bound on dfrac}.}. Finally, exact fractional
distance results, appearing in asterisks, are included in
Figure~\ref{fig: mindistbarfracdist}. These results were taken
from~\cite{lpdecode} and depict exact fractional distance, averaged
over a sample of $100$ codes. With the exception of these exact
results from~\cite{lpdecode}, all results depict the average over
the same $10$ randomly-generated codes, taken from Gallager's
$(3,4)$-regular ensemble. Comparing the exact results with our lower
bound, we see that although the randomly-selected codes are not the
same, the statistical averages indicate that the lower bound is
close to the mark for $B=10N$.

\section{Conclusion} \label{sec: conclusion}
In this paper we made obtained the following contributions to the
framework of LP decoding. First, a method for improving LP decoding
based on merging check nodes was proposed. Second, an algorithm for
determining a lower bound on the minimum distance was presented.
This algorithm can be improved by the check node merging technique
or by the greedy procedure introduced in Section~\ref{sec:min
distance bound}. This algorithm has computation complexity $O(N^2)$,
where $N$ is the block length. Third, an algorithm for determining a
tight lower bound on the minimum distance was presented. This
algorithm also has complexity $O(N^2)$. Fourth, we showed how the
fundamental polytope can be obtained for GLDPC and nonbinary codes.

\rem{
\section*{Acknowledgment}
The author would like to thank...
}

\bibliography{bibliography}

\end{document}